\def\maxwidth{ %
  \ifdim\Gin@nat@width>\linewidth
    \linewidth
  \else
    \Gin@nat@width
  \fi
}
\definecolor{fgcolor}{rgb}{0.345, 0.345, 0.345}
\definecolor{shadecolor}{rgb}{.97, .97, .97}
\definecolor{messagecolor}{rgb}{0, 0, 0}
\definecolor{warningcolor}{rgb}{1, 0, 1}
\definecolor{errorcolor}{rgb}{1, 0, 0}
\newenvironment{knitrout}{}{} 
\newcommand{\SweaveOpts}[1]{}  
\newcommand{\SweaveInput}[1]{} 
\newcommand{\Sexpr}[1]{}       
\newcolumntype{C}[1]{>{\centering}m{#1}}
\DeclareFontFamily{OT1}{pzc}{}
\DeclareFontShape{OT1}{pzc}{m}{it}{<-> s * [1.10] pzcmi7t}{}
\DeclareMathAlphabet{\mathpzc}{OT1}{pzc}{m}{it}
\newcommand{\removelatexerror}{\let\@latex@error\@gobble}
\newcommand{\Rem}[1]{\tcp*{#1}}
\newcommand{\Remi}[1]{\tcp*[f]{#1}}
\title{Network Flow-Based Refinement for Multilevel Hypergraph Partitioning}
\author[1]{Tobias Heuer}
\author[2]{Peter Sanders}
\author[3]{Sebastian Schlag}
\affil[1]{Karlsruhe Institute of Technology, Karlsruhe, Germany\\
  \texttt{tobias.heuer@gmx.net}}
\affil[2]{Karlsruhe Institute of Technology, Karlsruhe, Germany\\
  \texttt{sanders@kit.edu}}
\affil[3]{Karlsruhe Institute of Technology, Karlsruhe, Germany\\
  \texttt{sebastian.schlag@kit.edu}}
\authorrunning{T. Heuer, P. Sanders and S. Schlag}
\subjclass{G.2.2 Graph Theory, G.2.3 Applications}
\keywords{multilevel hypergraph partitioning, network flows, refinement}
\newcommand{\realrange}[2]{\left[#1, #2\right]}
\newcommand{\unitrange}[2]{\realrange{0}{1}}
\newcommand{\Oh}[1]{\mathcal{O}\!\left( #1\right)}
\newcommand{\llabel}[1]{\label{\labelprefix:#1}}
\newcommand{\labelprefix}{} 
\newcommand{\discussionsize}{\small}
\newcommand{\frage}[1]{}
\newenvironment{code}{\noindent
\begin{tabbing}%
\hspace{2em}\=\hspace{2em}\=\hspace{2em}\=\hspace{2em}\=\hspace{2em}\=%
\hspace{2em}\=\hspace{2em}\=\hspace{2em}\=\hspace{2em}\=\hspace{2em}\=%
\kill}{\end{tabbing}}
\newcommand{\labelcommand}{}
\newcommand{\captiontext}{}
\newsavebox{\codeparam}
\newcounter{lineNumber}
\newenvironment{disscodepos}[3]{%
\renewcommand{\labelcommand}{#2}%
\renewcommand{\captiontext}{#3}%
\sbox{\codeparam}{\parbox{\textwidth}{#3}}%
\begin{figure}[#1]\begin{center}\begin{code}\setcounter{lineNumber}{1}}{%
\end{code}\end{center}\caption{\llabel{\labelcommand}\captiontext}\end{figure}}
\newdimen\endofsize\endofsize=0.5em
\def\endofbeweis{~\quad\hglue\hsize minus\hsize
                 \hbox{\vrule height \endofsize width
\endofsize}\par}
\newcommand{\ALL}{All}
\newcommand{\Dual}{Dual}
\newcommand{\Primal}{Primal}
\newcommand{\ISPD}{ISPD98}
\newcommand{\Literal}{Literal}
\newcommand{\SPM}{SPM}
\newcommand{\DAC}{DAC}
\newcommand{\KaHyPar}[1]{KaHyPar-#1}
\newcommand{\hMetis}[1]{hMetis-#1}
\newcommand{\PaToH}[1]{PaToH-#1}
\newcommand{\FlowVariant}[3]{(#1\text{F},#2\text{M},#3\text{FM})}
\newcommand{\Constant}[1]{\textsc{Constant#1}}
\begin{document}

\maketitle

\begin{abstract}
  We present a refinement framework for multilevel hypergraph partitioning that uses max-flow computations on pairs of blocks to improve
  the solution quality of a $k$-way partition. The framework generalizes the flow-based improvement algorithm of KaFFPa
  from graphs to hypergraphs and is integrated into the hypergraph partitioner KaHyPar.
  By reducing the size of hypergraph flow networks, improving the flow model used in KaFFPa, and
  developing techniques to improve the running time of our algorithm, we obtain a partitioner that computes the best solutions
  for a wide range of benchmark hypergraphs from different application areas while still having a running time comparable to that of hMetis.
\end{abstract}

\section{Introduction}
Given an undirected hypergraph $H=(V,E)$, the \emph{$k$-way hypergraph partitioning problem} 
is to partition the vertex set into $k$ disjoint blocks of bounded size (at most $1+\varepsilon$ times the average block size)
such that an objective function involving the cut hyperedges is minimized.
Hypergraph partitioning (HGP) has many important applications in practice such as  
scientific computing~\cite{PaToH} or VLSI design~\cite{Papa2007}.
Particularly VLSI design is a field where small improvements can lead to significant savings~\cite{Wichlund:1998}.

It is well known that HGP is NP-hard~\cite{Lengauer:1990}, which is why practical applications mostly use heuristic \emph{multilevel} algorithms~\cite{MultiLevel_Bui,MultiLevel_Cong,MultiLevel_Hauck,MultiLevel_Hendrickson}. 
These algorithms successively \emph{contract} the hypergraph to obtain a hierarchy of smaller, structurally similar hypergraphs.
After applying an \emph{initial partitioning} algorithm to the smallest hypergraph, contraction is undone and, at each level, a
\emph{local search} method is used to improve the partitioning induced by the coarser level.
\emph{All} state-of-the-art HGP algorithms~\cite{ahss2017alenex,MLPart,Aykanat:2008,Zoltan,hs2017sea,SHP,hMetisRB,hMetisKway,KaHyPar-R,Parkway2.0,DBLP:conf/dimacs/CatalyurekDKU12,Mondriaan}
either use variations of the Kernighan-Lin (KL)~\cite{KLAlgorithm,Schweikert:1972} or the Fiduccia-Mattheyses (FM) heuristic~\cite{FM82,HypergraphKFM}, or
simpler greedy algorithms~\cite{hMetisRB,hMetisKway} for local search.
These heuristics move vertices between blocks in descending order of improvements in the optimization objective (gain) and
are known to be prone to get stuck in local optima when used directly on the input hypergraph~\cite{hMetisKway}.
The multilevel paradigm helps to some extent, since 
it allows a more global view on the problem on the coarse levels and a very fine-grained view on the fine levels of the multilevel hierarchy.
However, the performance of move-based approaches degrades for hypergraphs with large hyperedges.
In these cases, it is difficult to find meaningful vertex moves that improve the solution quality because large hyperedges are
likely to have many vertices in multiple blocks~\cite{doi:10.1137/S1064827502410463}. Thus the gain of moving a single vertex to another block is likely to be \emph{zero}~\cite{FormulaPartitioning14}.

While finding \emph{balanced} minimum cuts in hypergraphs is NP-hard, a minimum cut separating two vertices can be found
in polynomial time using network flow algorithms and the well-known max-flow min-cut theorem~\cite{ford1956maximal}. Flow algorithms find an optimal min-cut
and do not suffer the drawbacks of move-based approaches. However, they were long overlooked as heuristics
for balanced partitioning due to their high complexity~\cite{GraphEdgeReduction,yang1996balanced}.
In the context of graph partitioning, Sanders and Schulz~\cite{kaffpa} recently 
presented a max-flow-based improvement algorithm which is integrated into the multilevel partitioner KaFFPa and computes high quality solutions.

\subparagraph*{Outline and Contribution.}\label{OutlineContribution}
Motivated by the results of Sanders and Schulz~\cite{kaffpa}, we generalize the max-flow min-cut refinement framework
of KaFFPa from graphs to hypergraphs.
After introducing basic notation and giving a brief overview of related work and the
techniques used in KaFFPa in Section~\ref{Preliminaries}, we explain how hypergraphs are transformed into
flow networks and present a technique to reduce the size of the resulting hypergraph flow network in Section~\ref{HypergraphFlows}.
In Section~\ref{SourcesSinks} we then show how this network can be used to construct a flow problem such that the
min-cut induced by a max-flow computation between a pair of blocks improves the solution quality of a $k$-way partition.
We furthermore identify shortcomings of the KaFFPa approach that restrict the search space of feasible solutions significantly and
introduce an advanced model that overcomes these limitations by exploiting the structure of hypergraph flow networks.
We implemented our algorithm in the open source HGP framework KaHyPar and therefore briefly discuss implementation details and techniques to improve
the running time in Section~\ref{KaHyParIntegration}.
Extensive experiments presented in Section~\ref{Experiments} demonstrate that our flow model yields better solutions than the KaFFPa approach
for both hypergraphs \emph{and} graphs. We furthermore show that using pairwise flow-based refinement significantly improves
partitioning quality. The resulting hypergraph partitioner, \KaHyPar{MF},  performs better than \emph{all} competing algorithms on \emph{all} instance classes
and still has a running time comparable to that of hMetis.
On a large benchmark set consisting of 3222 instances from various application domains, \KaHyPar{MF} computes the best partitions in 2427 cases.

\section{Preliminaries}\label{Preliminaries}
\subsection{Notation and Definitions}\label{Notations}
An \textit{undirected hypergraph} $H=(V,E,c,\omega)$ is defined as a set of $n$ vertices $V$ and a
set of $m$ hyperedges/nets $E$ with vertex weights $c:V \rightarrow \mathbb{R}_{>0}$ and net 
weights $\omega:E \rightarrow \mathbb{R}_{>0}$, where each net is a subset of the vertex set $V$ (i.e., $e \subseteq V$). The vertices of a net are called \emph{pins}. 
We extend $c$ and $\omega$ to sets, i.e., $c(U) :=\sum_{v\in U} c(v)$ and $\omega(F) :=\sum_{e \in F} \omega(e)$.
A vertex $v$ is \textit{incident} to a net $e$ if $v \in e$. $\mathrm{I}(v)$ denotes the set of all incident nets of $v$. 
The \textit{degree} of a vertex $v$ is $d(v) := |\mathrm{I}(v)|$. 
The \textit{size} $|e|$ of a net $e$ is the number of its pins.
Given a subset $V' \subset V$, the \emph{subhypergraph} $H_{V'}$ is defined as $H_{V'}:=(V', \{e \cap V'~|~e \in E : e \cap V' \neq \emptyset \})$.

A \emph{$k$-way partition} of a hypergraph $H$ is a partition of its vertex set into $k$ \emph{blocks} $\mathrm{\Pi} = \{V_1, \dots, V_k\}$ 
such that $\bigcup_{i=1}^k V_i = V$, $V_i \neq \emptyset $ for $1 \leq i \leq k$, and $V_i \cap V_j = \emptyset$ for $i \neq j$.
We call a $k$-way partition $\mathrm{\Pi}$ \emph{$\mathrm{\varepsilon}$-balanced} if each block $V_i \in \mathrm{\Pi}$ satisfies the \emph{balance constraint}:
$c(V_i) \leq L_{\max} := (1+\varepsilon)\lceil \frac{c(V)}{k} \rceil$ for some parameter $\mathrm{\varepsilon}$. 
For each net $e$, $\mathrm{\Lambda}(e) := \{V_i~|~ V_i \cap e \neq \emptyset\}$ denotes the \emph{connectivity set} of $e$.
The \emph{connectivity} of a net $e$ is $\mathrm{\lambda}(e) := |\mathrm{\Lambda}(e)|$.
A net is called \emph{cut net} if $\mathrm{\lambda}(e) > 1$.
Given a $k$-way partition $\mathrm{\Pi}$ of $H$, the \emph{quotient graph} $Q := (\mathrm{\Pi}, \{(V_i,V_j)~|~\exists e \in E : \{V_i,V_j\} \subseteq  \mathrm{\Lambda}(e)\})$ contains an edge between each pair of adjacent blocks.
The \emph{$k$-way hypergraph partitioning problem} is to find an $\varepsilon$-balanced $k$-way partition $\mathrm{\Pi}$ of a hypergraph $H$ that
minimizes an objective function over the cut nets for some~$\varepsilon$.
Several objective functions exist in the literature~\cite{Alpert19951,Lengauer:1990}.
The most commonly used cost functions are the \emph{cut-net} metric $\text{cut}(\mathrm{\Pi}) := \sum_{e \in E'} \omega(e)$ and the
\emph{connectivity} metric $(\mathrm{\lambda} - 1)(\mathrm{\Pi}) := \sum_{e\in E'} (\mathrm{\lambda}(e) -1)~\omega(e)$~\cite{ConnecivityMetric}, where $E'$ is the set of all cut nets~\cite{donath1988logic}.
In this paper, we use the $(\lambda - 1)$-metric.
Optimizing both objective functions is known to be NP-hard \cite{Lengauer:1990}.
Hypergraphs can be represented as \emph{bipartite} graphs~\cite{HuMoerder85}.
In the following, we use \emph{nodes} and \emph{edges} when referring to graphs and \emph{vertices}
and \emph{nets} when referring to hypergraphs. In the bipartite graph $G_*(V \dot\cup E, F)$ the vertices and nets of $H$ form the node set and
for each net $e \in \mathrm{I}(v)$, we add an edge $(e,v)$ to $G_*$. The edge set $F$ is thus defined as $F := \{(e,v)~|~e \in E, v \in e \}$. Each net in $E$ therefore corresponds to a \emph{star} in $G_*$.

Let $G=(V,E,c,\omega)$ be a weighted directed graph. We use the same notation as for hypergraphs to refer to node weights $c$,
edge weights $\omega$, and node degrees $d(v)$. Furthermore $\mathrm{\Gamma(u)} := \{v : (u,v) \in E\}$ denotes the neighbors of node $u$.
A \emph{path} $P = \langle v_1,\ldots,v_k\rangle$ is a sequence of nodes, such that each pair of consecutive nodes is connected by an edge.
A \emph{strongly connected component} $C \subseteq V$ is a set of nodes such that for each $u,v \in C$
there exists a path from $u$ to $v$. A \emph{topological} ordering is a linear ordering $\prec$ of $V$ such that every
directed edge $(u,v) \in E$ implies $u \prec v$ in the ordering. A set of nodes $B \subseteq V$ is called a \emph{closed set} iff
there are no outgoing edges leaving $B$, i.e., if the conditions $u \in B$ and $(u,v) \in E$ imply $v \in B$.
A subset $S\subset V$ is called a \emph{node separator} if its removal divides $G$ into two disconnected components.

A flow network $\mathcal{N}=(\mathcal{V},\mathcal{E},\mathpzc{c})$ is a directed graph with two distinguished nodes $\mathpzc{s}$ and $\mathpzc{t}$
in which each edge $e \in \mathcal{E}$ has a capacity $\mathpzc{c}(e) \geq 0$. An $(\mathpzc{s},\mathpzc{t})$-flow (or flow)
is a function $f: \mathcal{V} \times \mathcal{V} \rightarrow \mathbb{R}$ that satisfies the \emph{capacity constraint} $\forall u,v \in \mathcal{V}: f(u,v) \le \mathpzc{c}(u,v)$,
the \emph{skew symmetry constraint} $\forall v \in \mathcal{V} \times \mathcal{V}: f(u,v) = -f(v,u)$, and
the \emph{flow conservation constraint} $\forall u \in \mathcal{V} \setminus \{\mathpzc{s},\mathpzc{t}\}: \sum_{v \in \mathcal{V}} f(u,v) = 0$. The value of a flow $|f| := \sum_{v \in \mathcal{V}}{f(\mathpzc{s},v)}$ is defined as the total amount
of flow transferred from $\mathpzc{s}$ to $\mathpzc{t}$.
The \emph{residual capacity} is defined as $r_f(u,v) = \mathpzc{c}(u,v) - f(u,v)$.
Given a flow $f$, $\mathcal{N}_f = (\mathcal{V},\mathcal{E}_f, r_f)$ with $\mathcal{E}_f= \{(u,v) \in \mathcal{V} \times \mathcal{V}\ |\ r_f(u,v) > 0\}$
is the \emph{residual network}. An  $(\mathpzc{s},\mathpzc{t})$-cut (or cut) is a bipartition $(\mathcal{S}, \mathcal{V}\setminus \mathcal{S})$ of a flow network $\mathcal{N}$  with $\mathpzc{s} \in \mathcal{S} \subset \mathcal{V}$ and $\mathpzc{t} \in \mathcal{V} \setminus \mathcal{S}$. The capacity of an $(\mathpzc{s},\mathpzc{t})$-cut is defined as $\sum_{e \in \mathcal{E}'}\mathpzc{c}(e)$, where $\mathcal{E}'= \{(u,v) \in \mathcal{E}: u \in \mathcal{S}, v \in \mathcal{V} \setminus \mathcal{S}\}$.
The max-flow min-cut theorem states that the value $|f|$ of a maximum flow is equal to the capacity of a minimum cut
separating $\mathpzc{s}$ and $\mathpzc{t}$~\cite{ford1956maximal}. 

\subsection{Related Work}\label{RelatedWork}
\subparagraph*{Hypergraph Partitioning.}\label{Overview}
Driven by applications in VLSI design and scientific computing, HGP has evolved into a broad research area since the 1990s.
We refer to \cite{Alpert19951,DBLP:conf/dimacs/2012,Papa2007,trifunovic2006parallel} for an extensive overview. 
Well-known multilevel HGP software packages with certain distinguishing characteristics include PaToH~\cite{PaToH} (originating from scientific computing),
hMetis~\cite{hMetisRB,hMetisKway} (originating from VLSI design), KaHyPar~\cite{ahss2017alenex,hs2017sea,KaHyPar-R} (general purpose, $n$-level), Mondriaan~\cite{Mondriaan} (sparse matrix
partitioning), MLPart~\cite{MLPart} (circuit partitioning), Zoltan~\cite{Zoltan}, Parkway~\cite{Parkway2.0}  and SHP~\cite{SHP}  (distributed),
UMPa~\cite{DBLP:conf/dimacs/CatalyurekDKU12} (directed hypergraph model, multi-objective), and kPaToH (multiple constraints, fixed vertices)~\cite{Aykanat:2008}.
\emph{All} of these tools either use variations of the Kernighan-Lin (KL)~\cite{KLAlgorithm,Schweikert:1972} or the Fiduccia-Mattheyses (FM) heuristic~\cite{FM82,HypergraphKFM},
or algorithms that greedily move vertices~\cite{hMetisKway} or nets~\cite{hMetisRB} to improve solution quality in the refinement phase.

\subparagraph*{Flows on Hypergraphs.}\label{HypergraphFlows}
While flow-based approaches have not yet been considered as refinement algorithms for multilevel HGP, several works deal with flow-based hypergraph min-cut computation.
The problem of finding minimum $(\mathpzc{s},\mathpzc{t})$-cuts in hypergraphs was first considered by Lawler~\cite{lawler1973}, who showed that it can be reduced to computing
maximum flows in directed graphs.
Hu and Moerder~\cite{HuMoerder85} present an augmenting path algorithm to compute a minimum-weight vertex separator on the
star-expansion of the hypergraph. Their vertex-capacitated network can also be transformed into
an edge-capacitated network using a transformation due to Lawler~\cite{lawler2001combinatorial}.
Yang and Wong~\cite{yang1996balanced} use repeated, incremental max-flow min-cut computations on the Lawler network~\cite{lawler1973} to find $\varepsilon$-balanced hypergraph bipartitions.
Solution quality and running time of this algorithm are improved by Lillis and Cheng~\cite{480016} by introducing advanced heuristics to select source and sink nodes.
Furthermore, they present a preflow-based~\cite{GoldbergT88} min-cut algorithm that implicitly operates on the star-expanded hypergraph.
Pistorius and Minoux~\cite{pistorius2003} generalize the algorithm of Edmonds and Karp~\cite{edmonds1972theoretical} to hypergraphs by labeling both vertices and nets.
Liu and Wong~\cite{GraphEdgeReduction} simplify Lawler's  hypergraph flow network~\cite{lawler1973} by explicitly distinguishing between graph edges
and hyperedges with three or more pins.  This approach significantly reduces the size of flow networks derived from VLSI hypergraphs, since most of the nets in a circuit are graph edges.
Note that the above-mentioned approaches to model hypergraphs as flow networks for max-flow min-cut computations do not contradict the negative results of Ihler et al.~\cite{IhlerWW93}, who show
that, \emph{in general}, there does not exist an edge-weighted graph $G=(V,E)$ that correctly represents the min-cut properties of the corresponding hypergraph $H=(V,E)$.

\subparagraph*{Flow-Based Graph Partitioning.}\label{GraphFlows}
Flow-based refinement algorithms for graph partitioning include Improve~\cite{andersen2008algorithm} and MQI~\cite{lang2004flow}, which improve expansion or conductance of
bipartitions. MQI also
yields as small improvement when used as a post processing technique on hypergraph bipartitions initially computed by hMetis~\cite{lang2004flow}.
 FlowCutter~\cite{HamannS16} uses an approach similar to Yang and Wong~\cite{yang1996balanced} to compute graph bisections that are Pareto-optimal in regard to cut size and balance.
 Sanders and Schulz~\cite{kaffpa} present a flow-based refinement framework for their direct $k$-way  graph partitioner KaFFPa. The algorithm works on pairs of adjacent blocks and constructs flow problems such that each min-cut in the flow network is a feasible solution in regard to the original partitioning problem. 

\subparagraph*{KaHyPar.}\label{KaHyPar}
Since our algorithm is integrated into the KaHyPar framework, we briefly review its core components.
While traditional multilevel HGP algorithms contract matchings or clusterings and therefore
work with a coarsening hierarchy of $\Oh{\log n}$ levels, KaHyPar instantiates the multilevel paradigm in the
extreme $n$-level version, removing only a \emph{single} vertex between two levels.
After coarsening, a portfolio of simple algorithms is used to create an initial partition of the coarsest hypergraph. During uncoarsening,
strong localized local search heuristics based on the FM algorithm~\cite{FM82,HypergraphKFM} are used to refine
the solution. 
Our work builds on KaHyPar-CA~\cite{hs2017sea}, which is a direct $k$-way partitioning algorithm for optimizing the $(\lambda-1)$-metric. It uses an
improved coarsening scheme that incorporates global information about the community structure of the hypergraph into the coarsening process.


\subsection{The Flow-Based Improvement Framework of KaFFPa}\label{KaFFPa}
We discuss the framework of Sanders and Schulz~\cite{kaffpa} in greater detail, since our work makes use of the techniques proposed by the authors.
For simplicity, we assume $k=2$. The techniques can be applied on a $k$-way partition by repeatedly executing the algorithm on pairs of adjacent blocks.
To schedule these refinements, the authors propose an \emph{active block scheduling} algorithm, which schedules blocks as long as
their participation in a pairwise refinement step results in some changes in the $k$-way partition.

An $\varepsilon$-balanced bipartition of a graph $G = (V,E,c,\omega)$ is improved with flow computations as follows.
The basic idea is to construct a flow network $\mathcal{N}$ based on the induced subgraph $G[B]$, where $B \subseteq V$ is a set of nodes around
the cut of $G$. The size of $B$ is controlled by an imbalance factor $\varepsilon' := \alpha \varepsilon$, where $\alpha$ is a scaling parameter that is chosen adaptively depending on the result of the min-cut computation.
If the heuristic found an $\varepsilon$-balanced partition using $\varepsilon'$, the cut is accepted and $\alpha$ is increased to $\min(2\alpha, \alpha')$ where $\alpha'$ is a predefined upper bound.
Otherwise it is decreased to $\max(\frac{\alpha}{2},1)$. This scheme continues until a maximal number of rounds is reached or a feasible
partition that did not improve the cut is found.

In each round, the corridor $B := B_1 \cup B_2$ is constructed by performing two breadth-first searches (BFS). The first BFS is done in the induced subgraph $G[V_1]$. It is
initialized with the boundary nodes of $V_1$ and stops if $c(B_1)$ would exceed $(1+\epsilon') \lceil \frac{c(V)}{2} \rceil - c(V_2)$. The second BFS constructs $B_2$ in an analogous fashion
using $G[V_2]$.
Let $\delta B := \{u \in B\ |\ \exists (u,v) \in E: v \notin B\}$ be the border of $B$. Then $\mathcal{N}$ is constructed by
connecting all border nodes  $\delta B \cap V_1$ of $G[B]$ to the source $\mathpzc{s}$ and all border nodes $\delta B \cap V_2$ to the sink $\mathpzc{t}$ using directed edges with
an edge weight of $\infty$. By connecting $\mathpzc{s}$ and $\mathpzc{t}$ to the respective border nodes, it is ensured that  edges incident to border nodes, but not contained in $G[B]$, cannot become
cut edges.
For $\alpha = 1$, the size of $B$ thus ensures that the flow network $\mathcal{N}$ has the \emph{cut property}, i.e., each $(\mathpzc{s},\mathpzc{t})$-min-cut in $\mathcal{N}$ yields an $\varepsilon$-balanced partition of $G$ with a possibly smaller cut. For larger values of $\alpha$, this does not have to be the case.

After computing a max-flow in $\mathcal{N}$, the algorithm tries to find a min-cut with better balance.
This is done by exploiting the fact that \emph{one} $(\mathpzc{s},\mathpzc{t})$-max-flow contains information about \emph{all} $(\mathpzc{s},\mathpzc{t})$-min-cuts~\cite{picard1980structure}.
More precisely, the algorithm uses the 1--1 correspondence between $(\mathpzc{s},\mathpzc{t})$-min-cuts and closed sets containing $\mathpzc{s}$ in the Picard-Queyranne-DAG $D_{\mathpzc{s},\mathpzc{t}}$ of the residual graph $\mathcal{N}_f$~\cite{picard1980structure}.
First, $D_{\mathpzc{s},\mathpzc{t}}$ is constructed by contracting each strongly connected component of the residual graph.
Then the following heuristic (called most balanced minimum cuts) is repeated several times using different random seeds.
Closed node sets containing $s$ are computed by sweeping through the nodes of $DAG_{\mathpzc{s},\mathpzc{t}}$  in reverse topological order (e.g. computed using a randomized DFS).
Each closed set induces a differently balanced min-cut and the one with the best balance (with respect to the original balance constraint) is used as resulting bipartition.

\section{Hypergraph Max-Flow Min-Cut Refinement}
In the following, we generalize the flow-based refinement algorithm of KaFFPa to hypergraph partitioning.
In Section~\ref{HypergraphFlows} we first show how hypergraph flow networks $\mathcal{N}$ are constructed in general and introduce a technique
to reduce their size by removing \emph{low-degree} hypernodes.
Given a $k$-way partition $\mathrm{\Pi}_k=\{V_1,\dots,V_k\}$ of a hypergraph $H=(V,E)$, a pair of blocks $(V_i,V_j)$ adjacent in the quotient graph $Q$, and a corridor $B$,
Section~\ref{SourcesSinks} then explains how $\mathcal{N}$ is used to build a flow problem $\mathcal{F}$ based on a $B$-induced subhypergraph $H_B=(V_B,E_B)$.
The flow problem $\mathcal{F}$ is constructed such that an $(\mathpzc{s},\mathpzc{t})$-max-flow computation optimizes the \emph{cut} metric of the bipartition $\mathrm{\Pi}_2=(V_i,V_j)$ of $H_B$
and thus improves the $(\lambda -1)$-metric in $H$.
Section~\ref{KaHyParIntegration} then discusses the integration into KaHyPar and introduces several techniques
to speed up flow-based refinement. Algorithm~\ref{alg:HGPflows} gives a pseudocode description of the entire flow-based refinement framework.

\begin{algorithm2e}[t!]
\caption{Flow-Based Refinement}\label{alg:HGPflows}\normalsize
\LinesNumberedHidden
\SetKwFunction{refine}{$\FuncSty{MaxFlowMinCutRefinement}$}\SetKwFunction{proc}{proc}
\SetKwProg{myalg}{Algorithm}{}{}
\KwIn{Hypergraph $H$, $k$-way partition $\mathrm{\Pi}_k=\{V_1,\dots,V_k\}$, imbalance parameter $\varepsilon$.}
\SetKwRepeat{Do}{do}{while}
\myalg{\refine{$H, \mathrm{\Pi}_k$}}{
  $Q:=  \FuncSty{QuotientGraph}(H,\mathrm{\Pi}_k)$\\
  \While(\Remi{in the beginning all blocks are active}){$\exists$ active blocks $\in Q$} {
    \ForEach(\Remi{choose a pair of blocks}){$\{(V_i, V_j) \in Q~|~V_i \vee V_j~\text{is active}\}$ }{
      $\mathrm{\Pi_\text{old}} = \mathrm{\Pi}_\text{best} := \{V_i, V_j\} \subseteq \Pi_k$ \Rem{extract bipartition to be improved}
      $\varepsilon_\text{old} = \varepsilon_\text{best} := \FuncSty{imbalance}(\Pi_k)$ \Rem{imbalance of current $k$-way partition}
      $\alpha := \alpha'$ \Rem{use large $B$-corridor for first iteration}
      \Do(\Remi{adaptive flow iterations}){$\alpha \ge 1$} {
        $B := \FuncSty{computeB-Corridor}(H, \mathrm{\Pi}_\text{best}, \alpha\varepsilon)$ \Rem{as described in Section~\ref{KaFFPa}}
      $H_B := \FuncSty{SubHypergraph}(H, B)$ \Rem{create $B$-induced subhypergraph}
      $\mathcal{N}_B := \FuncSty{FlowNetwork}(H_B)$ \Rem{as described in Section~\ref{HypergraphFlows}}
        $\mathcal{F} := \FuncSty{FlowProblem}(\mathcal{N_B})$  \Rem{as described in Section~\ref{SourcesSinks}}
      $f := \FuncSty{maxFlow}(\mathcal{F})$ \Rem{compute maximum flow on $\mathcal{F}$}
        $\mathrm{\Pi}_f := \FuncSty{mostBalancedMinCut}(f, \mathcal{F})$  \Rem{as in Section~\ref{KaFFPa} \& \ref{HypergraphFlows}}
        $\varepsilon_f := \FuncSty{imbalance}(\mathrm{\Pi}_f \cup \mathrm{\Pi}_k \setminus \mathrm{\Pi}_\text{old})$ \Rem{imbalance of new $k$-way partition}
        \If(\Remi{found improvement}){$(\text{cut}(\mathrm{\Pi}_f) < \text{cut}(\mathrm{\Pi}_\text{best}) \wedge \varepsilon_f \le \varepsilon) \vee \varepsilon_f < \varepsilon_\text{best}$}{
          $\alpha := \min(2\alpha, \alpha'),~ \mathrm{\Pi}_\text{best} := \mathrm{\Pi}_f,~\varepsilon_\text{best} := \varepsilon_f$ \Remi{update $\alpha$, $\mathrm{\Pi}_\text{best}$,$\varepsilon_\text{best}$}
        } \lElse{
          $\alpha := \frac{\alpha}{2}$ \Remi{decrease size of $B$-corridor in next iteration}
        }
      }

     \If(\Remi{improvement found}){$\mathrm{\Pi}_\text{best} \neq \mathrm{\Pi}_\text{old}$}{
       $\mathrm{\Pi}_k := \mathrm{\Pi}_\text{best} \cup \mathrm{\Pi}_k \setminus \mathrm{\Pi}_\text{old}$ \Rem{replace $\mathrm{\Pi}_\text{old}$ with $\mathrm{\Pi}_\text{best}$}
       $\FuncSty{activateForNextRound}(V_i, V_j)$ \Rem{reactivate blocks for next round}
      }
    }
  }
  \Return $\mathrm{\Pi}_k$
}{}
\KwOut{improved $\varepsilon$-balanced $k$-way partition $\mathrm{\Pi}_k=\{V_1, \dots, V_k\}$}
\end{algorithm2e}

\subsection{Hypergraph Flow Networks}\label{HypergraphFlows}

\subparagraph*{The Liu-Wong Network~\cite{GraphEdgeReduction}.}
Given a hypergraph $H=(V,E,c,\omega)$ and two distinct nodes $\mathpzc{s}$ and $\mathpzc{t}$, an $(\mathpzc{s},\mathpzc{t})$-min-cut can be computed by finding a minimum-capacity cut in the following flow
network $\mathcal{N}=(\mathcal{V}, \mathcal{E})$:
\begin{itemize}
\item $\mathcal{V}$ contains all vertices in $V$.
\item For each multi-pin net $e \in E$ with $|e| \geq 3$, add two \emph{bridging} nodes $e'$ and $e''$ to $\mathcal{V}$ and a \emph{bridging} edge $(e',e'')$ with capacity $\mathpzc{c}(e',e'')=\omega(e)$ to $\mathcal{E}$. For each pin $p \in e$, add two edges $(p,e')$ and $(e'',p)$ with capacity $\infty$ to $\mathcal{E}$.
\item For each two-pin net  $e=(u,v) \in E$, add two \emph{bridging edges} $(u,v)$ and $(v,u)$ with capacity $\omega(e)$ to $\mathcal{E}$.
\end{itemize}
The flow network of Lawler~\cite{lawler1973} does not distinguish between two-pin and multi-pin nets. This increases the size of the network by two vertices and three edges per two-pin net.
Figure~\ref{fig:flow_networks} shows an example of the Lawler and Liu-Wong hypergraph flow networks as well as of our network described in the following paragraph.

\subparagraph*{Removing Low Degree Hypernodes.}
We further decrease the size of the network by using the observation that the problem of finding an $(\mathpzc{s},\mathpzc{t})$-min-cut of $H$ can be reduced
to finding a minimum-weight $(\mathpzc{s},\mathpzc{t})$-vertex-separator in the star-expansion, where the capacity of each star-node is the weight of the corresponding net and all other nodes (corresponding to vertices in $H$)
have \emph{infinite} capacity~\cite{HuMoerder85}.
Since the separator has to be a subset of the star-nodes,
it is possible to replace \emph{any} infinite-capacity node by adding a clique between all adjacent star-nodes without affecting the separator.
The key observation now is that an infinite-capacity node $v$ with degree $d(v)$ induces $2d(v)$ infinite-capacity edges in the Lawler network~\cite{lawler1973},
while a clique between star-nodes induces $d(v)(d(v)-1)$ edges. For hypernodes with $d(v) \leq 3$, it therefore holds that  $d(v)(d(v)-1)\le 2d(v)$.
Thus we can reduce the number of nodes and edges of the Liu-Wong network as follows. Before applying the transformation on the star-expansion of $H$,
we remove all infinite-capacity nodes $v$ corresponding to hypernodes with $d(v) \leq 3$ that are \emph{not} incident to any two-pin nets and add a clique between all star-nodes adjacent to $v$.
In case $v$ was a source or sink node, we create a multi-source multi-sink problem by adding all adjacent star-nodes to the set of sources resp. sinks~\cite{FordFulkerson}.

\begin{figure}[t!] 
  \centering
  \includegraphics[width=\textwidth]{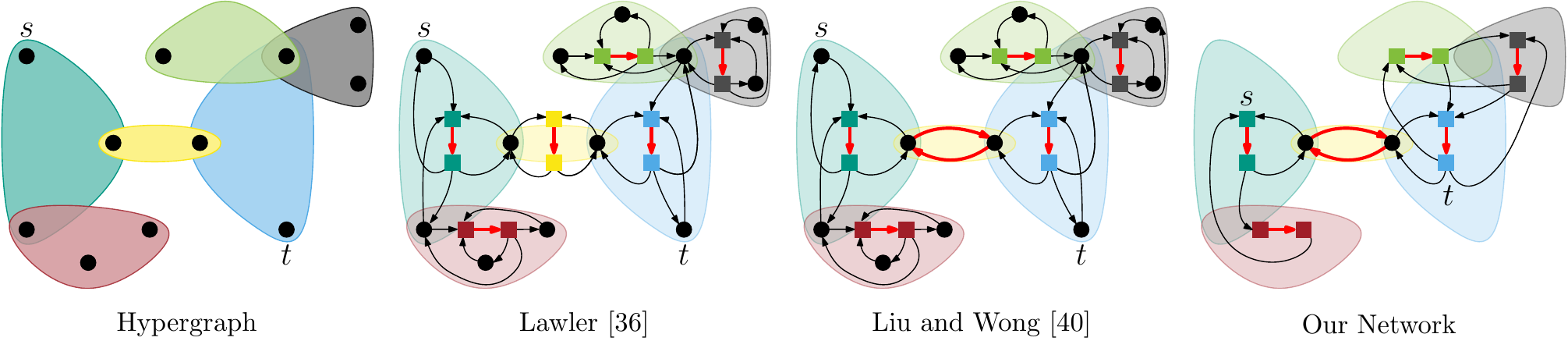}
  \caption{Illustration of hypergraph flow networks. Our approach further sparsifies the flow network of Liu and Wong~\cite{GraphEdgeReduction}. Thin edges have infinite capacity.}\label{fig:flow_networks}
\end{figure}

\subparagraph*{Reconstructing Min-Cuts.}
After computing an $(\mathpzc{s},\mathpzc{t})$-max-flow in the Lawler or Liu-Wong network, an $(\mathpzc{s},\mathpzc{t})$-min-cut of $H$ can be computed by a BFS in the residual graph starting from $\mathpzc{s}$.
Let $S$ be the set of nodes corresponding to vertices of $H$ reached by the BFS. Then $(S,V\setminus S)$ is an $(\mathpzc{s},\mathpzc{t})$-min-cut.
Since our network does not contain low degree hypernodes, we use the following lemma to compute an $(\mathpzc{s},\mathpzc{t})$-min-cut of $H$:

\begin{lemma}
\label{lemma:bipartition}
Let $f$ be a maximum $(\mathpzc{s},\mathpzc{t})$-flow in the Lawler network $\mathcal{N}=(\mathcal{V}, \mathcal{E})$ of a hypergraph $H=(V,E)$ and $(\mathcal{S},\mathcal{V}\setminus \mathcal{S})$ be the corresponding $(\mathpzc{s},\mathpzc{t})$-min-cut in $\mathcal{N}$.
Then for each node $v \in \mathcal{S} \cap V$, the residual graph $\mathcal{N}_f=(\mathcal{V}_f, \mathcal{E}_f)$ contains at least one path $\langle \mathpzc{s}, \dots, e'' \rangle$
to a bridging node $e''$ of a net $e \in \mathrm{I}(v)$.
\end{lemma}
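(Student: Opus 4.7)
The plan is to combine the standard characterization of $\mathcal{S}$ as the set of nodes reachable from $\mathpzc{s}$ in $\mathcal{N}_f$ with the observation that every pin node $v \in V$ in the Lawler network is adjacent only to bridging nodes of nets $e \in \mathrm{I}(v)$. Fixing any residual path $P = \langle \mathpzc{s}, \ldots, u, v\rangle$ witnessing $v \in \mathcal{S}$, its last edge $(u,v) \in \mathcal{E}_f$ therefore has $u$ equal to some $e''$ or some $e'$ with $e \in \mathrm{I}(v)$, and I would proceed by a case analysis on which of these occurs.

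If $u = e''$, the prefix of $P$ is already a path $\langle \mathpzc{s}, \ldots, e''\rangle$ of the required form. If $u = e'$, then $(e', v)$ is not an original edge of $\mathcal{N}$ (the only edge between $v$ and $e'$ runs in the opposite direction), so its presence in $\mathcal{N}_f$ forces $f(v, e') > 0$. The key step is then to apply flow conservation at $v$, valid because $v \neq \mathpzc{s}, \mathpzc{t}$: a strictly positive outflow along $(v, e')$ gives $\sum_{e \in \mathrm{I}(v)} f(e'', v) = \sum_{e \in \mathrm{I}(v)} f(v, e') \geq f(v, e') > 0$, so some $e_* \in \mathrm{I}(v)$ satisfies $f(e_*'', v) > 0$. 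The corresponding reverse edge $(v, e_*'')$ is therefore in $\mathcal{E}_f$, and appending it to the residual $\mathpzc{s}$-$v$-path produces the required path $\langle \mathpzc{s}, \ldots, v, e_*''\rangle$ ending at a bridging node of a net incident to $v$.

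The main obstacle is the second case: the witnessing path enters $v$ through a reverse edge that cancels existing flow, so its bridging predecessor $e'$ is the ``wrong'' endpoint from the lemma's viewpoint. The crucial move is to convert this cancellation back into guaranteed inflow via flow conservation at $v$, which forces some $e_*''$ onto the source side as well. One subtlety worth flagging is that the argument silently requires $v \neq \mathpzc{s}$; this is automatic in the partitioning application of Section~\ref{SourcesSinks} where the super-source lies outside $V$, but for $v = \mathpzc{s}$ flow conservation fails and small examples (e.g.\ a single two-pin net whose bridging edge is saturated by the max-flow) show the claim can break.
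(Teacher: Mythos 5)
Your proof is correct and follows essentially the same route as the paper's: both distinguish whether the residual $\mathpzc{s}$--$v$ path enters $v$ from a bridging node $e''$ or from an $e'$, and in the latter case use $f(v,e')>0$ together with flow conservation at $v$ to obtain a residual edge $(v,e_*'')$ extending the path. Your additional remark that flow conservation requires $v\neq\mathpzc{s}$ is a valid caveat the paper leaves implicit (it holds in the intended application, where $\mathpzc{s}$ and $\mathpzc{t}$ are additional nodes outside $V$).
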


\begin{proof}
  Since $v \in \mathcal{S}$, there has to be some path $\mathpzc{s} \rightsquigarrow v$ in $\mathcal{N}_f$. By definition of the flow network, this path can
  either be of the form $P_1=\langle \mathpzc{s}, \dots, e'', v\rangle$ or  $P_2=\langle \mathpzc{s}, \dots, e', v\rangle$ for some bridging nodes $e',e''$ corresponding to nets $e \in \mathrm{I}(v)$.
  In the former case we are done, since $e'' \in P_1$. In the latter case the existence of edge $(e',v) \in \mathcal{E}_f$ implies that
  there is a positive flow $f(v,e') > 0$ over edge $(v,e') \in \mathcal{E}$. Due to flow conservation, there exists at least one edge $(e'', v) \in \mathcal{E}$ with $f(e'',v)>0$,
  which implies that $(v, e'') \in \mathcal{E}_f$. Thus we can extend the path $P_2$ to $\langle \mathpzc{s}, \dots, e', v, e''\rangle$.
\end{proof}

Thus $(A, V \setminus A)$ is an $(\mathpzc{s},\mathpzc{t})$-min-cut of $H$, where $A := \{v \in e~|~\exists e \in E : \langle \mathpzc{s},\dots,e'' \rangle ~\text{in}~\mathcal{N}_f \}$.
Furthermore this allows us to search for more balanced
min-cuts using the Picard-Queyranne-DAG of $\mathcal{N}_f$ as described in Section~\ref{KaFFPa}.
By the definition of closed sets it follows that if a bridging node $e''$ is contained in a closed set $C$, then
all nodes $v \in \mathrm{\Gamma}(e'')$ (which correspond to vertices of $H$) are also contained in $C$. Thus we can use the respective bridging nodes $e''$ as representatives of removed low degree hypernodes.

\subsection{Constructing the Hypergraph Flow Problem}\label{SourcesSinks}
Let $H_B=(V_B, E_B)$ be the subhypergraph of $H=(V,E)$ that is induced by a corridor $B$ computed in the bipartition $\mathrm{\Pi}_2=(V_i,V_j)$.
In the following, we distinguish between the set of \emph{internal} border nodes  $\overrightarrow{B} := \{v \in V_B~|~\exists e \in E: \{u,v\} \subseteq e \wedge u \notin V_B\}$
and the set of \emph{external} border nodes $\overleftarrow{B} := \{u \notin V_B~|~\exists e \in E: \{u,v\} \subseteq e \wedge v \in V_B\}$.
Similarly, we distinguish between \emph{external} nets $(e \cap V_B = \emptyset)$ with no pins inside $H_B$, \emph{internal} nets $(e \cap V_B = e)$ with all pins inside $H_B$,
and \emph{border} nets $e \in \mathrm{I}(\overrightarrow{B}) \cap \mathrm{I}(\overleftarrow{B})$ with some pins inside $H_B$ and some pins outside of $H_B$.
We use $\overleftrightarrow{E_B}$ to denote the set of border nets.

A hypergraph flow problem consists of a flow network $\mathcal{N}_B=(\mathcal{V}_B,\mathcal{E}_B)$ derived from $H_B$ and two \emph{additional} nodes $\mathpzc{s}$ and $\mathpzc{t}$
that are connected to some nodes $v \in \mathcal{V}_B$. Our approach works with all flow networks presented in Section~\ref{HypergraphFlows}.
A flow problem has the cut property if the resulting min-cut bipartition $\mathrm{\Pi}_f$ of $H_B$ 
does not increase the $(\lambda -1)$-metric in $H$. Thus it has to hold that $\text{cut}(\mathrm{\Pi}_f) \leq \text{cut}(\mathrm{\Pi}_2)$.
While external nets are not affected by a max-flow computation, the max-flow min-cut theorem~\cite{ford1956maximal} ensures the cut property
for all internal nets.
Border nets however require special attention. Since a border net $e$ is only \emph{partially} contained in $H_B$,
it will remain connected to the blocks of its external border nodes in $H$. 
In case external border nodes connect $e$ to both $V_i$ and $V_j$, it will remain a cut net in $H$
even if it is removed from the cut-set in $\mathrm{\Pi}_f$. 
 It is therefore necessary to ``encode'' information about external border nodes into the flow problem.

\subparagraph*{The KaFFPa Model and its Limitations.}
In KaFFPa, this is done by directly connecting internal border nodes $\overrightarrow{B}$ to $\mathpzc{s}$ and $\mathpzc{t}$.
This approach can also be used for hypergraphs. 
In the hypergraph flow problem $\mathcal{F}_G$, the source $\mathpzc{s}$ is connected to all nodes $\mathcal{S}=\overrightarrow{B} \cap V_i$ 
and all nodes $\mathcal{T}=\overrightarrow{B} \cap V_j$ are connected to  $\mathpzc{t}$ using directed edges with infinite capacity.
While this ensures that $\mathcal{F}_G$ has the cut property, applying the graph-based model to hypergraphs unnecessarily \emph{restricts} the search space. 
Since all internal border nodes $\overrightarrow{B}$ are connected to either  $\mathpzc{s}$ or  $\mathpzc{t}$,
\emph{every} min-cut $(S,V_B \setminus S)$ will have $\mathcal{S} \subseteq S$ and $\mathcal{T} \subseteq V_B \setminus S$. 
The KaFFPa model therefore prevents \emph{all} min-cuts in which any non-cut border net (i.e., $e \in \overleftrightarrow{E_B}$ with $\lambda(e)=1$) becomes part of the cut-set.
This restricts the space of possible solutions, since corridor $B$ was computed such that \emph{even} a min-cut along either side of the border would result in a feasible cut in $H_B$.
Thus, ideally, \emph{all} vertices $v \in B$ should be able to change their block as result of an $(\mathpzc{s},\mathpzc{t})$-max-flow computation on $\mathcal{F}_G$ -- 
not only vertices $v \in B \setminus \overrightarrow{B}$.
This limitation becomes increasingly relevant for hypergraphs with large nets as well as for partitioning problems with small imbalance $\varepsilon$, since
large nets are likely to be only partially contained in $H_B$ and tight balance constraints enforce small $B$-corridors. While the former is a problem only for
HGP, the latter also applies to GP.

\subparagraph*{A more flexible Model.}
We propose a more general model that allows an $(\mathpzc{s},\mathpzc{t})$-max-flow computation to also cut through border nets 
by exploiting the structure of hypergraph flow networks. 
Instead of directly connecting $\mathpzc{s}$ and $\mathpzc{t}$ to internal border nodes $\overrightarrow{B}$ and thus preventing all min-cuts
in which these nodes switch blocks, we conceptually extend $H_B$ to contain all external border nodes $\overleftarrow{B}$ and all border nets $\overleftrightarrow{E_B}$.
The resulting hypergraph is $\overleftarrow{H_B} = (V_B \cup \overleftarrow{B}, \{e \in E~|~ e \cap V_B \neq \emptyset \})$.
The key insight now is that
by using the flow network of $\overleftarrow{H_B}$ and connecting $\mathpzc{s}$ resp. $\mathpzc{t}$ to the \emph{external} border nodes $\overleftarrow{B} \cap V_i$ resp.  $\overleftarrow{B} \cap V_j$,
we get a flow problem that does not lock \emph{any} node $v \in V_B$ in its block, since none of these nodes is directly connected to either $\mathpzc{s}$ or $\mathpzc{t}$.
Due to the max-flow min-cut theorem~\cite{ford1956maximal}, this flow problem furthermore has the cut property, since all border nets of $H_B$ are now internal nets and all external border nodes $\overleftarrow{B}$ are locked inside their block.
However, it is not necessary to use $\overleftarrow{H_B}$ instead of $H_B$ to achieve this result.
For all vertices $v \in \overleftarrow{B}$ the flow network of $\overleftarrow{H_B}$ contains paths $\langle \mathpzc{s}, v, e' \rangle$ and $\langle e'', v, \mathpzc{t}\rangle$ that only involve infinite-capacity edges.
Therefore we can remove all nodes $v \in \overleftarrow{B}$
by directly connecting $\mathpzc{s}$ and $\mathpzc{t}$ to the corresponding bridging nodes $e',e''$ via infinite-capacity edges without affecting the maximal flow~\cite{FordFulkerson}.
More precisely, in the hypergraph flow problem $\mathcal{F}_H$, we connect $\mathpzc{s}$ to all bridging nodes $e'$ corresponding to border nets $e \in \overleftrightarrow{E_B} : e \subset \overleftarrow{B} \cap V_i $ and all bridging nodes $e''$ corresponding to border nets  $e \in  \overleftrightarrow{E_B} : e \subset \overleftarrow{B} \cap V_j$ to $\mathpzc{t}$
using directed, infinite-capacity edges.
\subparagraph*{Single-Pin Border Nets.}Furthermore, we model border nets with $|e \cap \overrightarrow{B}| = 1$ more efficiently.
For such a net $e$, the flow problem contains paths of the form $\langle \mathpzc{s},e',e'',v \rangle$ or $\langle v, e', e'', \mathpzc{t} \rangle$
which can be replaced by paths of the form $\langle \mathpzc{s},e',v \rangle$ or  $\langle v, e'',\mathpzc{t} \rangle$ with
$\mathpzc{c}(e',v)=\omega(e)$ resp. $\mathpzc{c}(v,e'')=\omega(e)$. In both cases we can thus remove one bridging node and two infinite-capacity edges.
A comparison of $\mathcal{F}_G$ and $\mathcal{F}_H$ is shown in Figure~\ref{fig:balanced_bipartitioning}.

\begin{figure}[t!]
  \centering
  \includegraphics[width=\textwidth]{}
  \caption{Comparison of the KaFFPa flow problem $\mathcal{F}_G$ and our flow problem $\mathcal{F}_H$. For clarity the zoomed in view is based on the Lawler network.}~\label{fig:balanced_bipartitioning}
\end{figure}

\subsection{Implementation Details}\label{KaHyParIntegration}
Since KaHyPar is an $n$-level partitioner, its FM-based local search algorithms are executed each time a vertex is uncontracted.
To prevent  expensive recalculations, it therefore uses a cache to maintain the gain values of FM moves throughout the $n$-level hierarchy~\cite{ahss2017alenex}.
In order to combine our flow-based refinement with FM local search, we not only perform the moves induced
by the max-flow min-cut computation but also update the FM gain cache accordingly.

Since it is not feasible to execute our algorithm on every level of the $n$-level hierarchy,
we use an exponentially spaced approach that performs flow-based refinements after uncontracting $i = 2^j$ vertices for $j \in \mathbb{N}_+ $.
This way, the algorithm is executed more often on smaller flow problems than on larger ones.
To further improve the running time, we introduce the following speedup techniques:
\begin{itemize}
\item \textbf{S1:} We modify active block scheduling such that after the first round the algorithm is only executed on a pair of blocks if at least one execution using these blocks improved connectivity or imbalance of the partition on previous levels.
\item\textbf{S2:} For all levels except the finest level: Skip flow-based refinement if the cut between two adjacent blocks is less than ten.
\item \textbf{S3:} Stop resizing the corridor $B$ if the current $(\mathpzc{s},\mathpzc{t})$-cut did not improve the previously best solution.
\end{itemize}

\section{Experimental Evaluation}\label{Experiments}
We implemented the max-flow min-cut refinement algorithm in the $n$-level hypergraph partitioning framework
\emph{KaHyPar} (\textbf{Ka}rlsruhe \textbf{Hy}pergraph \textbf{Par}titioning). 
The code is written in C++ and compiled using g++-5.2 with flags \texttt{-O3} \texttt{-march=native}.
The latest version of the framework is called KaHyPar-CA~\cite{hs2017sea}. We refer to our new algorithm as KaHyPar-MF.
Both versions use the default configuration for community-aware direct $k$-way partitioning.\footnote{\url{https://github.com/SebastianSchlag/kahypar/blob/master/config/km1_direct_kway_sea17.ini}}

\subparagraph*{Instances.}\label{Instances}
All experiments use hypergraphs from the benchmark set of Heuer and Schlag~\cite{hs2017sea}\footnote{The complete benchmark set along with detailed statistics for each
  hypergraph is publicly available from \url{http://algo2.iti.kit.edu/schlag/sea2017/}.},
which contains $488$ hypergraphs derived from four benchmark sets: the ISPD98 VLSI Circuit Benchmark Suite~\cite{ISPD98},
the DAC 2012 Routability-Driven Placement Contest~\cite{DAC2012}, the University of Florida Sparse Matrix Collection~\cite{FloridaSPM},
and the international SAT Competition 2014~\cite{SAT14Competition}.
Sparse matrices are translated into hypergraphs using the row-net model~\cite{PaToH}, i.e., each row is treated as a net and each column as a vertex.
SAT instances are converted to three different representations: For literal hypergraphs, each boolean \emph{literal} is mapped to one vertex and each clause constitutes a net~\cite{Papa2007}, while in the \emph{primal} model each variable is represented by a vertex and each clause is represented by a net.
In the \emph{dual} model the opposite is the case~\cite{FormulaPartitioning14}.
All hypergraphs have unit vertex and net weights. 

Table~\ref{tbl:benchmarkstats} gives an overview about the different benchmark sets used in the experiments. The full benchmark set is referred to as set A.
We furthermore use the representative subset of 165 hypergraphs proposed in~\cite{hs2017sea} (set B) and a smaller subset consisting of $25$ hypergraphs (set C),
which is used to devise the final configuration of KaHyPar-MF. Basic properties of set C can be found in Table~\ref{tbl:instancessmall} in Appendix~\ref{app:hypergraphs}.
Unless mentioned otherwise, all hypergraphs are partitioned into $k \in \{2,4,8,16,32,64,128\}$ blocks with $\varepsilon = 0.03$.
For each value of $k$, a $k$-way partition is considered to be \emph{one} test instance, resulting in a total of $175$ instances for set C, $1155$ instances for set B and $3416$ instances for set A.
Furthermore we use 15 graphs from \cite{DBLP:conf/wea/MeyerhenkeSS14} to compare our flow model $\mathcal{F}_H$ to the KaFFPa~\cite{kaffpa} model $\mathcal{F}_G$.
Table~\ref{tbl:graphinstances} in Appendix~\ref{app:hypergraphs} summarizes the basic properties of these graphs, which constitute set D.

\begin{table}[t!]
\centering
\caption{Overview about different benchmark sets. Set B and set C are subsets of set A.}
\label{tbl:benchmarkstats}
\begin{tabular}{lcrrrrrrrr}
  \toprule
      & Source & \#  &  DAC & ISPD98 & Primal & Dual & Literal & SPM & Graphs \\
  \midrule
Set A & \cite{hs2017sea}    & 477  & 10          & 18           & 92         & 92 & 92 & 184 & -  \\
Set B & \cite{hs2017sea}    & 165  & 5           & 10           & 30         & 30 & 30 & 60  & -   \\
Set C &  new   & 25   & -           & 5            & 5          & 5  & 5 & 5 & - \\
\midrule
Set D & \cite{DBLP:conf/wea/MeyerhenkeSS14}  & 15   & -           & -            & -          & -  & - & - & 15 \\
\bottomrule
\end{tabular}
\end{table}

\subparagraph*{System and Methodology.}\label{Methodology}
All experiments are performed on a single core of a machine consisting of two Intel Xeon E5-2670 Octa-Core processors (Sandy Bridge) 
clocked at $2.6$ GHz. The machine has $64$~GB main memory, $20$ MB L3- and 8x256 KB L2-Cache and is running RHEL 7.2.
We compare KaHyPar-MF to KaHyPar-CA, as well as to the $k$-way (hMetis-K) and the recursive bisection variant (hMetis-R) of hMetis 2.0 (p1)~\cite{hMetisRB,hMetisKway},
and to PaToH 3.2~\cite{PaToH}. These HGP libraries were chosen because they provide the best solution quality~\cite{ahss2017alenex,hs2017sea}.
The partitioning results of these tools are already available from \url{http://algo2.iti.kit.edu/schlag/sea2017/}. For each partitioner except PaToH
the results summarize ten repetitions with different seeds for each test instance and report the \emph{arithmetic 
  mean} of the computed cut and running time as well as the best cut found.
Since PaToH ignores the random seed if configured to use the quality preset, the results contain both the result of single run of the quality preset (PaToH-Q) and the average over ten repetitions using the default configuration (PaToH-D).
Each partitioner had a time limit of eight hours per test instance.
We use the same number of repetitions and the same time limit for our experiments with KaHyPar-MF.

In the following, we use the \emph{geometric mean} when averaging over different instances in order to give every instance a comparable influence on the final result.
In order to compare the algorithms in terms of solution quality, we perform a more detailed analysis using \emph{improvement plots}.
For each algorithm, these plots relate the minimum connectivity of KaHyPar-MF to the minimum connectivity produced by the corresponding algorithm on a per-instance basis.
For each algorithm, these ratios are sorted in decreasing order. The plots use a cube root scale for the y-axis to reduce right skewness~\cite{st0223}
and show the improvement of KaHyPar-MF in percent (i.e., $1-(\text{KaHyPar-MF}/\text{algorithm})$) on the y-axis.
A value below zero indicates that the partition of KaHyPar-MF was worse than the partition produced by the corresponding algorithm, while a value above zero
indicates that KaHypar-MF performed better than the algorithm in question.  A value of zero implies that the partitions of both algorithms had the same solution quality.
Values above one correspond to infeasible solutions that violated the balance constraint.
In order to include instances with a cut of zero into the results, we set the corresponding cut values to \emph{one} for ratio computations.

\begin{table}[t!]
\centering
\caption{Statistics of benchmark set B. We use $\overline{x}$ to denote mean and $\widetilde{x}$ to denote the median.}
\label{tbl:subsetstats}
\begin{tabular}{lrrrrr}
  \toprule
  Type    & \# & \multicolumn{1}{c}{$\overline{d(v)}$} &  \multicolumn{1}{c}{$\widetilde{d(v)}$} &  \multicolumn{1}{c}{$\overline{|e|}$} &  \multicolumn{1}{c}{$\widetilde{|e|}$}  \\
  \midrule
DAC     & 5  & 3.32          & 3.28           & 3.37         & 3.35  \\
ISPD    & 10 & 4.20          & 4.24           & 3.89         & 3.90  \\
Primal  & 30 & 16.29         & 9.97           & 2.63         & 2.39  \\
Literal & 30 & 8.21          & 4.99           & 2.63         & 2.39  \\
Dual    & 30 & 2.63          & 2.38           & 16.29        & 9.97  \\
SPM     & 60 & 24.78         & 14.15          & 26.58        & 15.01 \\
\bottomrule
\end{tabular}
\end{table}

\subsection{Evaluating Flow Networks, Models, and Algorithms}\label{sec:flow_algos_networks}

\subparagraph*{Flow Networks and Algorithms.}To analyze the effects of the different hypergraph flow networks we compute five bipartitions for
each hypergraph of set B with KaHyPar-CA using different seeds. Statistics of the hypergraphs are shown in Table~\ref{tbl:subsetstats}.
The bipartitions are then used to generate hypergraph flow networks for a corridor of size $|B| = \numprint{25000}$ hypernodes around the cut.
Figure~\ref{fig:flow_network_sizes} (top) summarizes the sizes of the respective flow networks in terms of number of nodes $|\mathcal{V}|$ and
number of edges $|\mathcal{E}|$ for each instance class.
 The flow networks of primal and literal SAT instances are the largest
in terms of both numbers of nodes and edges.
High average vertex degree combined with low average net sizes leads to subhypergraphs $H_B$ containing many small nets, which then induce many nodes and (infinite-capacity) edges in $\mathcal{N}_L$.
Dual instances with low average degree and large average net size on the other hand lead to smaller flow networks.
For VLSI instances (DAC, ISPD) both average degree and average net sizes are low, while for SPM~hypergraphs the opposite is the case.
This explains why SPM~flow networks have significantly more edges, despite the number of nodes being comparable in both classes.

\begin{figure}[t!]
  \centering
\begin{knitrout}
\definecolor{shadecolor}{rgb}{0.969, 0.969, 0.969}\color{fgcolor}

{\centering \includegraphics[width=\textwidth]{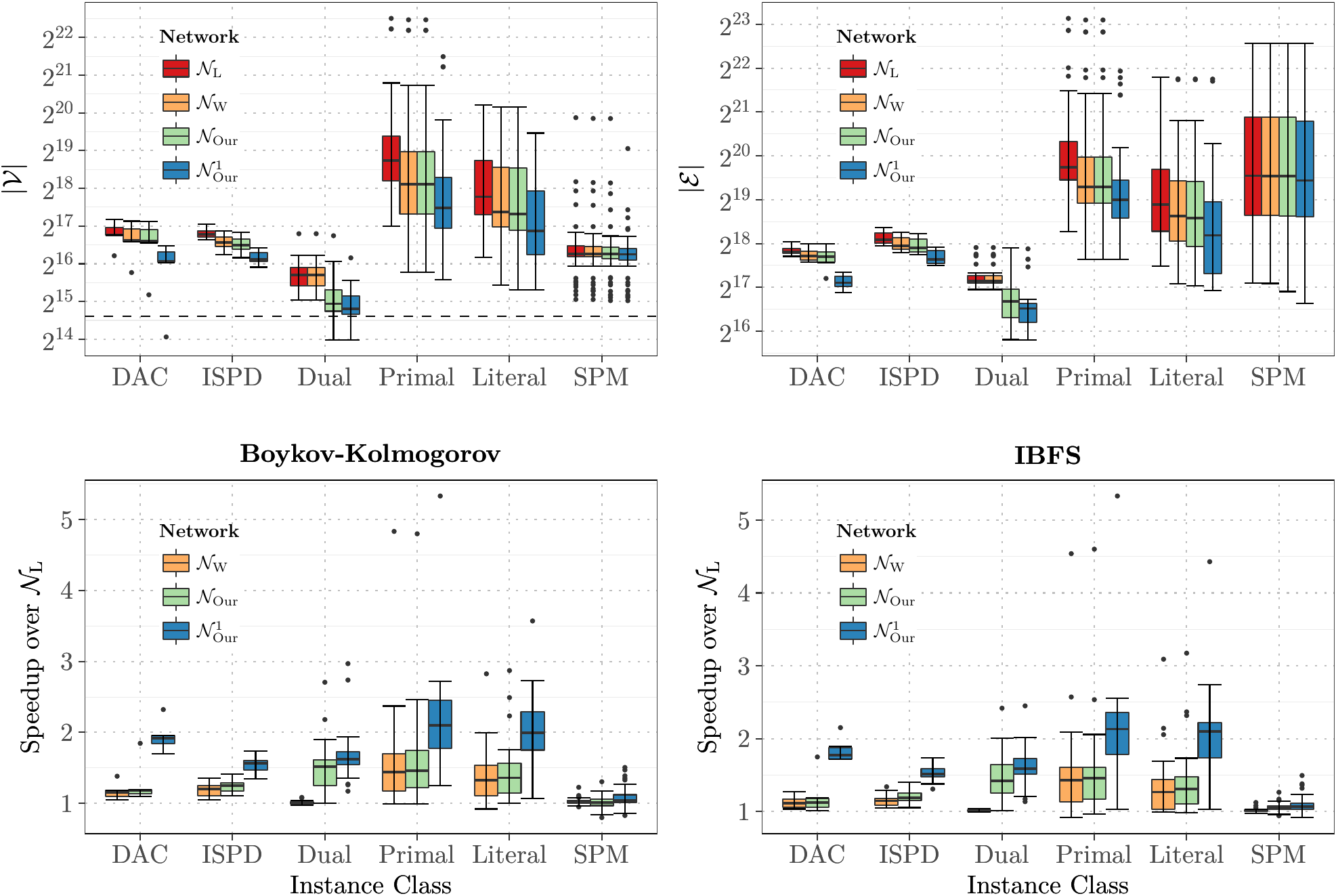} 

}

\end{knitrout}
\caption{Top: Size of the flow networks when using the Lawler network $\mathcal{N}_\text{L}$, the Liu-Wong network $\mathcal{N}_\text{W}$ and our network $\mathcal{N}_\text{Our}$.
  Network $\mathcal{N}_\text{Our}^1$ additionally models single-pin border nets more efficiently.
The dashed line indicates $\numprint{25000}$ nodes. Bottom: Speedup of BK~\cite{boykov2004experimental} and IBFS~\cite{goldberg2011maximum} max-flow algorithms over the execution on $\mathcal{N}_L$.}~\label{fig:flow_network_sizes}
\end{figure}

As expected, the Lawler-Network $\mathcal{N}_L$ induces the biggest flow problems.
Looking at the Liu-Wong network $\mathcal{N}_W$, we can see that distinguishing between graph edges and nets with $|e| \geq 3$ pins
has an effect for all hypergraphs with many small nets (i.e., DAC, ISPD, Primal, Literal).  While this technique alone does not improve
dual SAT instances, we see that the combination of the Liu-Wong approach and our removal of low degree hypernodes in  $\mathcal{N}_\text{Our}$
reduces the size of the networks for all instance classes except SPM. Both techniques only have a limited effect on these instances, since both hypernode
degrees \emph{and} net sizes are large on average. Since our flow problems are based on $B$-corridor induced subhypergraphs, $\mathcal{N}_\text{Our}^1$ additionally models single-pin border nets more efficiently as
described in Section~\ref{SourcesSinks}. This further reduces the network sizes significantly. As expected, the reduction in numbers of nodes and edges is most pronounced for hypergraphs with low average net sizes because
these instances are likely to contain many single-pin border nets.

To further see how these reductions in network size translate to improved running times of max-flow algorithms, we use these networks to create flow problems using our flow model $\mathcal{F}_H$
and compute min-cuts using two highly tuned max-flow algorithms, namely the BK-algorithm\footnote{Available from: \url{https://github.com/gerddie/maxflow}}~\cite{boykov2004experimental} and the incremental breadth-first search (IBFS) algorithm\footnote{Available from: \url{http://www.cs.tau.ac.il/~sagihed/ibfs/code.html}}~\cite{goldberg2011maximum}.
These algorithms were chosen because they performed best in preliminary experiments~\cite{MAHeuer}.
We then compare the speedups of these algorithms when executed on $\mathcal{N}_W$, $\mathcal{N}_\text{Our}$, and $\mathcal{N}_\text{Our}^1$ to the execution on the Lawler network $\mathcal{N}_L$.
As can be seen in Figure~\ref{fig:flow_network_sizes} (bottom) both algorithms benefit from improved network models and the speedups directly correlate with the reductions in network size.
While $\mathcal{N}_W$ significantly reduces the running times for \Primal~and \Literal~instances, $\mathcal{N}_\text{Our}$ additionally leads to a speedup for \Dual~instances.
By additionally considering single-pin border nets, $\mathcal{N}_\text{Our}^1$ results in an average speedup between $1.52$ and $2.21$ (except for SPM instances).
Since IBFS outperformed the BK algorithm in \cite{MAHeuer}, we use $\mathcal{N}_\text{Our}^1$ and IBFS in all following experiments.

\begin{table}
  \caption{Comparing the KaFFPa flow model $\mathcal{F}_G$ with our model $\mathcal{F}_H$ as described in Section~\ref{SourcesSinks}.
    The table shows the average improvement of $\mathcal{F}_H$ over $\mathcal{F}_G$  (in [\%]) for different imbalance parameters $\varepsilon$ on hypergraphs as well as on graphs.
    All experiments use configuration \FlowVariant{+}{-}{-}.}\label{tbl:flow_models}
  \centering
  
\begin{tabular}{r*{3}{S[table-number-alignment = center]}
                    *{3}{S[table-number-alignment = center]}}
  \toprule
 & \multicolumn{3}{c}{\textsc{Hypergraphs}} & \multicolumn{3}{c}{\textsc{Graphs}} \\
$\alpha'$ & \multicolumn{1}{c}{$\varepsilon = 1\%$}  & \multicolumn{1}{c}{$\varepsilon = 3\%$}  & \multicolumn{1}{c}{$\varepsilon = 5\%$}  & \multicolumn{1}{c}{$\varepsilon = 1\%$}  & \multicolumn{1}{c}{$\varepsilon = 3\%$}  & \multicolumn{1}{c}{$\varepsilon = 5\%$} \\     
\midrule%
\csname @@input\endcsname experiments/flow_alpha/flow_alpha_modeling_comparison_epsilon_table.tex
\bottomrule
\end{tabular}
\label{tbl:alpha_comparison_exp}
\end{table}

\subparagraph*{Flow Models.} We now compare the flow model $\mathcal{F}_G$ of KaFFPa to our advanced model $\mathcal{F}_H$ described in Section~\ref{SourcesSinks}.
The experiments summarized in Table~\ref{tbl:flow_models} were performed using sets C and D. To focus on the impact of the models
on solution quality, we deactivated KaHyPar's FM local search algorithms and only use flow-based refinement without the most balanced minimum cut heuristic.
The results confirm our hypothesis that $\mathcal{F}_G$ restricts the space of possible solutions.
For \emph{all} flow problem sizes and all imbalances tested, $\mathcal{F}_H$ yields better solution quality. As expected, the effects are most
pronounced for small flow problems and small imbalances where many vertices are likely to be border nodes.
Since these nodes are locked inside their respective block in $\mathcal{F}_G$, they prevent all non-cut border nets from becoming part of the cut-set.
Our model, on the other hand, allows \emph{all} min-cuts that yield a feasible solution for the original partitioning problem.
The fact that this effect \emph{also} occurs for the graphs of set D indicates that our model can also be effective
for traditional graph partitioning. All following experiments are performed using $\mathcal{F}_H$.

\subsection{Configuring the Algorithm}\label{sec:algo_configuration}
We now evaluate different configurations of the max-flow min-cut based refinement framework on set C.
In the following, KaHyPar-CA~\cite{hs2017sea} is used as a  reference. Since it neither uses (F)lows nor the (M)ost balanced minimum cut heuristic and
only relies on the (FM) algorithm for local search, it is referred to as \FlowVariant{-}{-}{+}. This basic configuration is then successively extended with specific components. The results of our
experiments are summarized in Table~\ref{tbl:alpha_exp} for increasing scaling parameter $\alpha'$.
The table furthermore includes a configuration \Constant{128}. In this configuration all components are enabled \FlowVariant{+}{+}{+} and
we perform flow-based refinements every 128 uncontractions. While this configuration is slow, it is used as a reference point for the quality achievable
using flow-based refinement.

\begin{table}[!htb]
  \caption{Different configurations of our flow-based refinement
          framework for increasing $\alpha'$. Column Avg$[\%]$ reports the quality improvement relative to the reference configuration \FlowVariant{-}{-}{+}.}
\label{tbl:alpha_exp}
\centering
\begin{tabular*}{\columnwidth}{@{\extracolsep{\fill}}r*{2}{S[table-number-alignment = center]}
                    *{2}{S[table-number-alignment = center,table-column-width=0.75cm]}
                    *{2}{S[table-number-alignment = center,table-column-width=0.75cm]}
                    *{2}{S[table-number-alignment = center,table-column-width=0.75cm]}
                    *{2}{S[table-number-alignment = center,table-column-width=0.75cm]}}
\toprule
 & \multicolumn{2}{c}{\FlowVariant{+}{-}{-}} & \multicolumn{2}{c}{\FlowVariant{+}{+}{-}}  & \multicolumn{2}{c}{\FlowVariant{+}{-}{+}} & \multicolumn{2}{c}{\FlowVariant{+}{+}{+}} & \multicolumn{2}{c}{\Constant{128}} \\
$\alpha'$ & Avg $[\%]$ & $t[s]$ & Avg $[\%]$ & $t[s]$ & Avg $[\%]$ & $t[s]$ & Avg $[\%]$ & $t[s]$ & Avg $[\%]$ & $t[s]$ \\
\midrule%
\csname @@input\endcsname experiments/flow_alpha/flow_alpha_table_m2_ibfs.tex
\bottomrule
\end{tabular*}
\end{table}

The results indicate that only using flows \FlowVariant{+}{-}{-} as refinement technique is inferior to localized FM local search in regard to both
running time and solution quality. Although the quality improves with increasing flow problem size (i.e., increasing $\alpha'$), the average connectivity is still worse than the reference configuration.
Enabling the most balanced minimum cut heuristic improves partitioning quality.
Configuration \FlowVariant{+}{+}{-} performs better than the basic configuration for  $\alpha' \geq 8$. 
By combining flows with the FM algorithm \FlowVariant{+}{-}{+} we get a configuration that improves upon the baseline configuration
even for small flow problems. However, comparing this variant with \FlowVariant{+}{+}{-} for $\alpha' = 16$, we see that using large flow problems together with
the most balanced minimum cut heuristic yields solutions of comparable quality. Enabling all components \FlowVariant{+}{+}{+} and using large flow problems
performs best. Furthermore we see that enabling FM local search slightly improves the running time for $\alpha' \geq 8$.
This can be explained by the fact that the FM algorithm already produces good cuts between the blocks such that fewer rounds of
pairwise flow refinements are necessary to further improve the solution.
Comparing configuration \FlowVariant{+}{+}{+} with \Constant{128} shows that performing flows more often further improves solution quality at the cost
of slowing down the algorithm by more than an order of magnitude.
In all further experiments, we therefore use configuration \FlowVariant{+}{+}{+} with $\alpha' = 16$ for KaHyPar-MF.
This configuration also performed best in the effectiveness tests presented in Appendix~\ref{app:effective}.
While this configuration performs better than KaHyPar-CA, its running time is still more than a factor of $3$ higher.

We therefore perform additional experiments on set B and successively enable the speedup heuristics described in Section~\ref{KaHyParIntegration}.
The results are summarized in Table~\ref{tbl:heuristics}. Only executing pairwise flow refinements on blocks that lead to an improvement on previous levels (S1)
reduces the running time of flow-based refinement by a factor of $1.27$, while skipping flows in case of small cuts (S2) results in a further speedup of $1.19$.
By additionally stopping the resizing of the flow problem as early as possible (S3), we decrease the running time of flow-based improvement by a factor of $2$ in total,
while still computing solutions of comparable quality. Thus in the comparisons with other systems, all heuristics are enabled.

\begin{table}[t!]
  \caption{Comparison of quality improvement and running times using speedup heuristics. Column $t_{\text{flow}}[s]$ refers to the running time of flow-based refinement, column $t[s]$ to the
    total partitioning time.}
\label{tbl:heuristics}
\centering
\begin{tabular}{l*{4}{S[table-number-alignment = center,table-column-width=1cm]}}
  \toprule
Configuration & \multicolumn{1}{c}{Avg $[\%]$} & \multicolumn{1}{c}{Min $[\%]$} & \multicolumn{1}{c}{$t_{\text{flow}}[s]$} & \multicolumn{1}{c}{$t[s]$} \\ 
\midrule%
\csname @@input\endcsname experiments/speed_up_heuristics/heuristic_table_m2_ibfs.tex
\bottomrule
\end{tabular} 
\end{table}

\subsection{Comparison with other Systems}\label{sec:final_evaluation}
Finally, we compare \KaHyPar{MF} to different state-of-the-art hypergraph
partitioners on the full benchmark set. We exclude the same $194$ out of $3416$ instances as in \cite{hs2017sea} because either 
\PaToH{Q} could not allocate enough memory or other partitioners did not finish in time. The
excluded instances are shown in Table~\ref{tbl:excluded} in Appendix~\ref{app:excluded_instances}. Note that KaHyPar-MF did not
lead to any further exclusions. The following comparison is therefore based on the remaining 3222 instances.
As can be seen in Figure~\ref{fig:final_flow}, KaHyPar-MF outperforms \emph{all} other algorithms on \emph{all} benchmark sets.
Comparing the best solutions of \KaHyPar{MF} to each partitioner individually across all instances (top left), \KaHyPar{MF} produced \emph{better} partitions than
\PaToH{Q}, \PaToH{D}, \hMetis{K}, \KaHyPar{CA}, \hMetis{R} for $92.1\%$, $91.7\%$, $85.1\%$, $83.7\%$, and $75.6\%$ of the instances, respectively.

Comparing the best solutions of all systems simultaneously, \KaHyPar{MF} produced the best partitions for $2427$ of
 the $3222$ instances. It is followed by \hMetis{R} ($678$), \KaHyPar{CA} ($388$), \hMetis{K} ($352$),  
 \PaToH{D} ($154$), and \PaToH{Q} ($146$). Note that for some instances multiple partitioners computed the same best solution and that we disqualified
 infeasible solutions that violated the balance constraint.

Figure~\ref{fig:final_flow_k} shows that \KaHyPar{MF} also performs best for different values of $k$ and that pairwise flow refinements
are an effective strategy to improve $k$-way partitions. As can be seen in Table~\ref{tbl:mincon_solution_quality}, the improvement over
\KaHyPar{CA} is most pronounced for hypergraphs derived from matrices of web graphs and social networks\footnote{Based on the following matrices: \texttt{webbase-1M}, \texttt{ca-CondMat}, \texttt{soc-sign-epinions}, \texttt{wb-edu}, \texttt{IMDB}, \texttt{as-22july06}, \texttt{as-caida}, \texttt{astro-ph}, \texttt{HEP-th}, \texttt{Oregon-1}, \texttt{Reuters911},
  \texttt{PGPgiantcompo}, \texttt{NotreDame\_www}, \texttt{NotreDame\_actors}, \texttt{p2p-Gnutella25}, \texttt{Stanford}, \texttt{cnr-2000}.} and dual SAT instances.
While the former are difficult to partition due to skewed degree and net size distributions, the latter are difficult because they contain many large nets.

Finally, Table~\ref{tbl:running_time} compares the running times of all partitioners. By using simplified flow networks, highly tuned flow algorithms and several
techniques to speed up the flow-based refinement framework, \KaHyPar{MF} is less than a factor of two slower than \KaHyPar{CA} and still achieves a running time comparable
to that of hMetis.

\begin{table}[t!h]
  \caption{Comparing the best solutions of \KaHyPar{MF} with the best results of \KaHyPar{CA} and
    other partitioners for different benchmark sets (top) and different values of $k$ (bottom). All values correspond to the quality improvement
    of \KaHyPar{MF} relative to the respective partitioner  (in \%).}
\label{tbl:mincon_solution_quality} 
\centering
\begin{tabular}{c*{8}{S[table-number-alignment = center]}}
  \toprule
\multirow{2}{*}{Algorithm} & \multicolumn{7}{c}{Min. $(\lambda - 1)$} \\
\cmidrule{2-9}
 & \multicolumn{1}{c}{\ALL} & \multicolumn{1}{c}{\DAC} & \multicolumn{1}{c}{\ISPD} & \multicolumn{1}{c}{\Primal} & \multicolumn{1}{c}{\Literal} & \multicolumn{1}{c}{\Dual} & \multicolumn{1}{c}{\SPM} &  \multicolumn{1}{c}{WebSoc} \\
\midrule%
\csname @@input\endcsname experiments/final_flow/final_flow_min_km1_per_instance_m2_ibfs.tex
\midrule%
& \multicolumn{1}{c}{$k = 2$} & \multicolumn{1}{c}{$k = 4$} & \multicolumn{1}{c}{$k = 8$} & \multicolumn{1}{c}{$k = 16$} & \multicolumn{1}{c}{$k = 32$} & \multicolumn{1}{c}{$k = 64$} & \multicolumn{1}{c}{$k = 128$}  \\
\midrule%
\csname @@input\endcsname experiments/final_flow/final_flow_min_km1_per_k_m2_ibfs.tex
\bottomrule
\end{tabular} 
\end{table}

\begin{table}[ht!]
  \caption{Comparing the average running times of \KaHyPar{MF} with \KaHyPar{CA} and
         other hypergraph partitioners for different benchmark sets (top) and different values of $k$ (bottom).}
\label{tbl:running_time} 
\centering
\begin{tabular}{c*{8}{S[table-number-alignment = center]}}
  \toprule
\multirow{2}{*}{Algorithm} & \multicolumn{7}{c}{Running Time $t[s]$} \\
\cmidrule{2-9}
 & \multicolumn{1}{c}{\ALL} & \multicolumn{1}{c}{\DAC} & \multicolumn{1}{c}{\ISPD} & \multicolumn{1}{c}{\Primal} & \multicolumn{1}{c}{\Literal} & \multicolumn{1}{c}{\Dual} & \multicolumn{1}{c}{\SPM} & \multicolumn{1}{c}{WebSoc}  \\
\midrule%
\csname @@input\endcsname experiments/final_flow/final_flow_running_time_m2_ibfs.tex
\midrule%
 & \multicolumn{1}{c}{$k = 2$} & \multicolumn{1}{c}{$k = 4$} & \multicolumn{1}{c}{$k = 8$} & \multicolumn{1}{c}{$k = 16$} & \multicolumn{1}{c}{$k = 32$} & \multicolumn{1}{c}{$k = 64$} & \multicolumn{1}{c}{$k = 128$}  \\
\midrule%
\csname @@input\endcsname experiments/final_flow/final_flow_running_time_per_k_m2_ibfs.tex
\bottomrule
\end{tabular} 
\end{table}

\begin{figure}
\centering
\hspace*{-1cm}
\includegraphics[width=\textwidth]{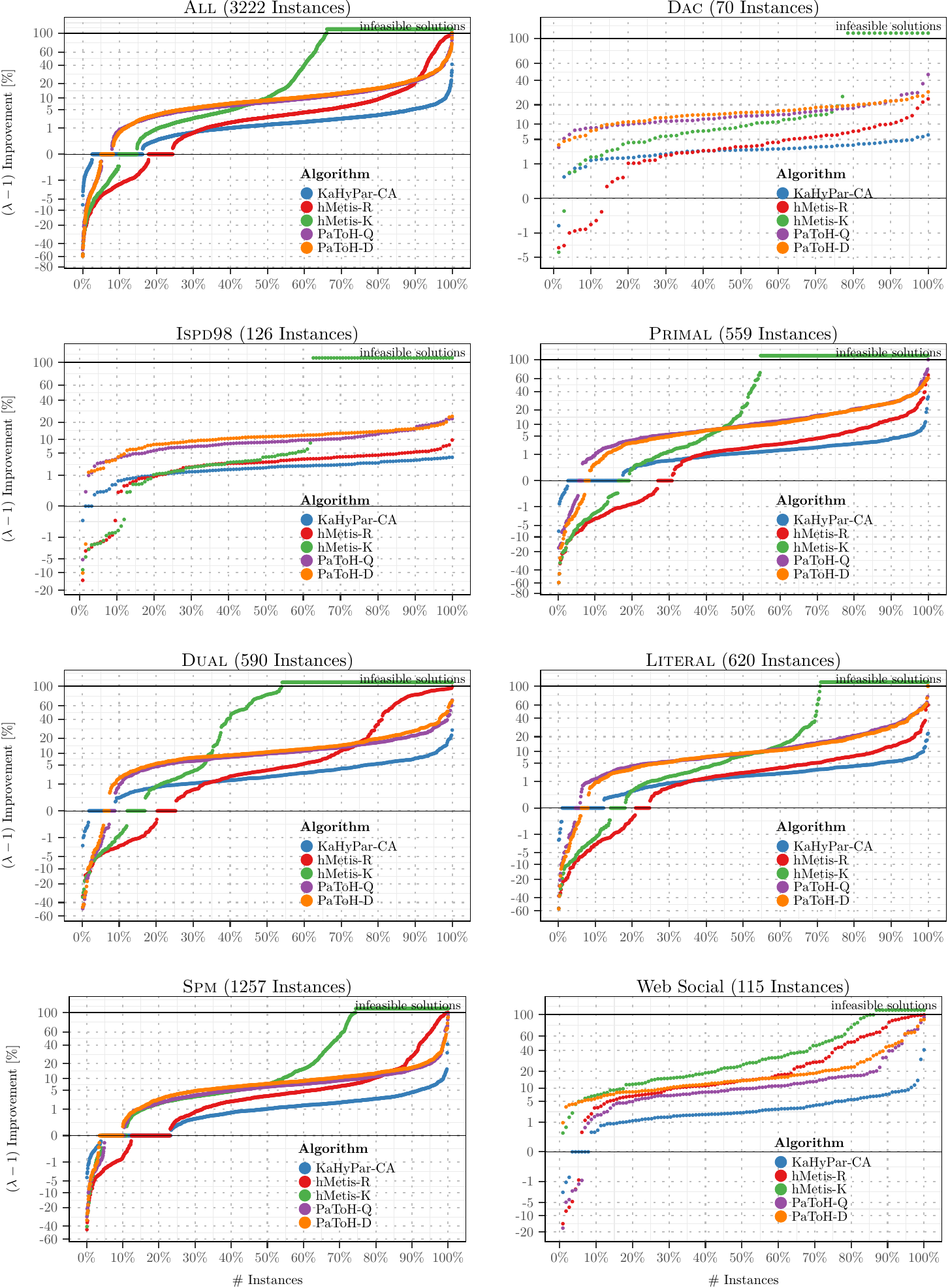} %
\caption{Min-Cut improvement plots comparing \KaHyPar{MF} with \KaHyPar{CA} and
         other partitioners for different instance classes.}
\label{fig:final_flow} 
\end{figure}

\begin{figure}
\centering
\hspace*{-1cm}
\includegraphics[width=\textwidth]{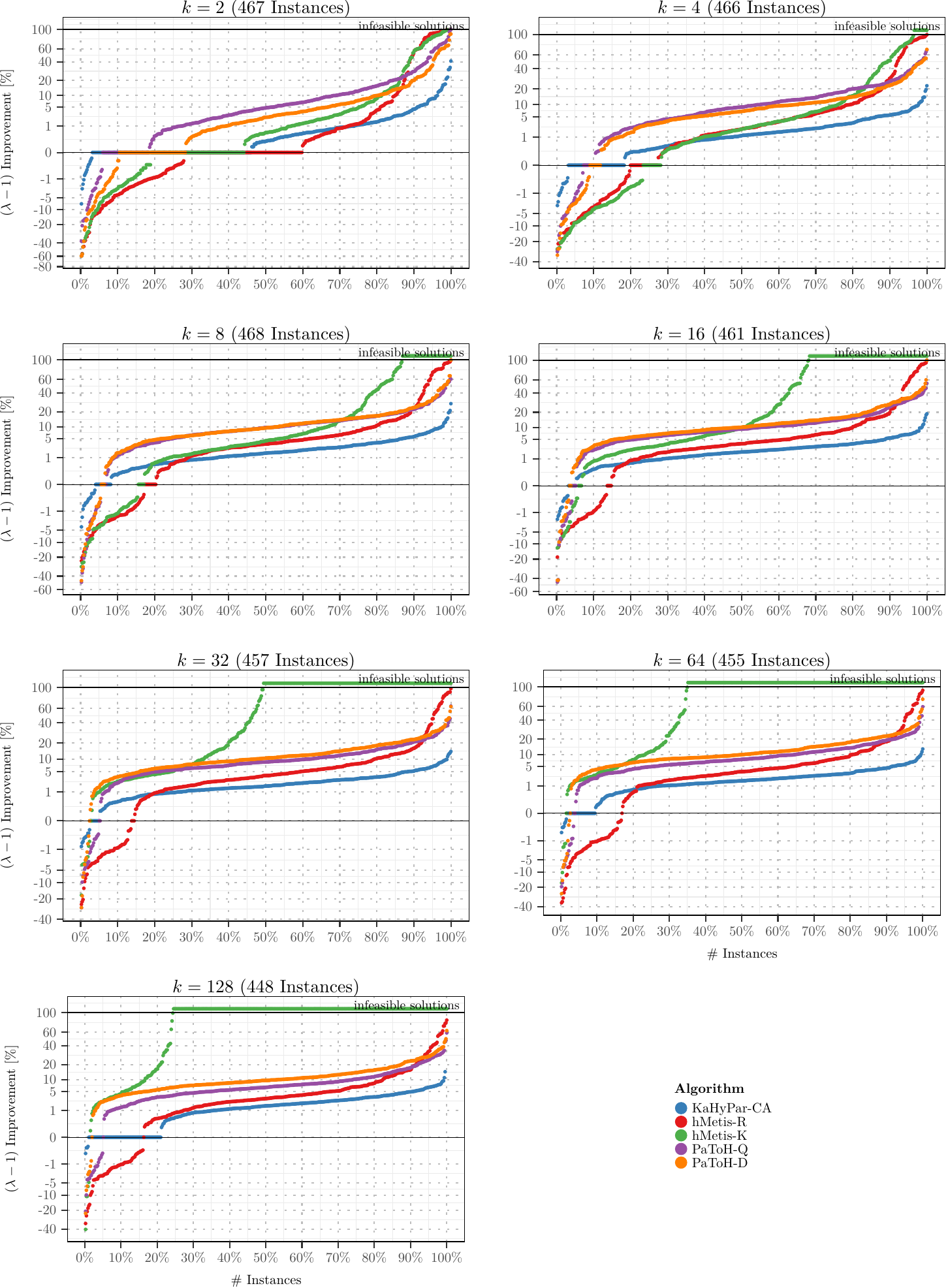} %
\caption{Min-Cut improvement plots comparing \KaHyPar{MF} with \KaHyPar{CA} and
         other partitioners for different values of $k$.}
\label{fig:final_flow_k}
\end{figure} 
\clearpage
\section{Conclusion}
We generalize the flow-based refinement framework of KaFFPa~\cite{kaffpa} from graph to hypergraph partitioning.
We reduce the size of Liu and Wong's hypergraph flow network~\cite{GraphEdgeReduction} by removing low degree hypernodes
and exploiting the fact that our flow problems are built on subhypergraphs of the input hypergraph.
Furthermore we identify shortcomings of the KaFFPa~\cite{kaffpa} approach that restrict the search space of feasible solutions significantly and
introduce an advanced model that overcomes these limitations by exploiting the structure of hypergraph flow networks.
Lastly, we present techniques to improve the running time of the flow-based refinement framework by a factor of $2$ without affecting solution quality.
The resulting hypergraph partitioner KaHyPar-MF performs better than \emph{all} competing algorithms on \emph{all} instance classes of a large benchmark set
and still has a running time comparable to that of hMetis.

Since our flow problem formulation yields significantly better solutions for both hypergraphs \emph{and} graphs than the KaFFPa~\cite{kaffpa} approach,
future work includes the integration of our flow model into KaFFPa and the evaluation in the context of a high quality graph partitioner.
Furthermore an approach similar to Yang and Wong~\cite{yang1996balanced} could be used as an alternative to the most balanced minimum cut heuristic and adaptive $B$-corridor resizing.
We also plan to extend our framework to optimize other objective functions such as cut or sum of external degrees.

\clearpage
\bibliography{p50-schlag.bib}

\begin{thebibliography}{10}

\bibitem{ConnecivityMetric}
P.~Agrawal, B.~Narendran, and N.~Shivakumar.
\newblock {Multi-way partitioning of VLSI circuits}.
\newblock In {\em Proceedings of 9th International Conference on VLSI Design},
  pages 393--399, Jan 1996.

\bibitem{ahss2017alenex}
Y.~Akhremtsev, T.~Heuer, P.~Sanders, and S.~Schlag.
\newblock Engineering a direct \emph{k}-way hypergraph partitioning algorithm.
\newblock In {\em 19th Workshop on Algorithm Engineering and Experiments,
  (ALENEX)}, pages 28--42, 2017.

\bibitem{ISPD98}
C.~J. Alpert.
\newblock {The ISPD98 Circuit Benchmark Suite}.
\newblock In {\em Proceedings of the 1998 International Symposium on Physical
  Design}, pages 80--85. ACM, 1998.

\bibitem{MLPart}
C.~J. Alpert, J.-H. Huang, and A.~B. Kahng.
\newblock {Multilevel Circuit Partitioning}.
\newblock {\em IEEE Transactions on Computer-Aided Design of Integrated
  Circuits and Systems}, 17(8):655--667, 1998.

\bibitem{Alpert19951}
C.~J. Alpert and A.~B. Kahng.
\newblock {Recent Directions in Netlist Partitioning: a Survey}.
\newblock {\em {Integration, the VLSI Journal}}, 19(1--2):1 -- 81, 1995.

\bibitem{andersen2008algorithm}
Reid Andersen and Kevin~J Lang.
\newblock {An Algorithm for Improving Graph Partitions}.
\newblock In {\em Proceedings of the 19th annual ACM-SIAM Symposium on Discrete
  Algorithms}, pages 651--660. Society for Industrial and Applied Mathematics,
  2008.

\bibitem{Aykanat:2008}
C.~Aykanat, B.~B. Cambazoglu, and B.~U\c{c}ar.
\newblock {Multi-level Direct K-way Hypergraph Partitioning with Multiple
  Constraints and Fixed Vertices}.
\newblock {\em Journal of Parallel and Distributed Computing}, 68(5):609--625,
  2008.

\bibitem{DBLP:conf/dimacs/2012}
D.~A. Bader, H.~Meyerhenke, P.~Sanders, and D.~Wagner, editors.
\newblock {\em Proc. Graph Partitioning and Graph Clustering - 10th {DIMACS}
  Implementation Challenge Workshop}, volume 588 of {\em Contemporary
  Mathematics}. AMS, 2013.

\bibitem{SAT14Competition}
A.~Belov, D.~Diepold, M.~Heule, and M.~J{\"{a}}rvisalo.
\newblock {The SAT Competition 2014}.
\newblock \url{http://www.satcompetition.org/2014/}, 2014.

\bibitem{boykov2004experimental}
Yuri Boykov and Vladimir Kolmogorov.
\newblock {An Experimental Comparison of Min-Cut/Max-Flow Algorithms for Energy
  Minimization in Vision}.
\newblock {\em IEEE Transactions on Pattern Analysis and Machine Intelligence},
  26(9):1124--1137, 2004.

\bibitem{MultiLevel_Bui}
T.~N. Bui and C.~Jones.
\newblock {A Heuristic for Reducing Fill-In in Sparse Matrix Factorization}.
\newblock In {\em {SIAM Conference on Parallel Processing for Scientific
  Computing}}, pages 445--452, 1993.

\bibitem{PaToH}
{\"{U}}.~V. Cataly{\"u}rek and C.~Aykanat.
\newblock {Hypergraph-Partitioning-Based Decomposition for Parallel
  Sparse-Matrix Vector Multiplication}.
\newblock {\em IEEE Transactions on Parallel and Distributed Systems},
  10(7):673--693, Jul 1999.

\bibitem{MultiLevel_Cong}
J.~Cong and M.~Smith.
\newblock {A Parallel Bottom-up Clustering Algorithm with Applications to
  Circuit Partitioning in {VLSI} Design}.
\newblock In {\em 30th Conference on Design Automation}, pages 755--760, June
  1993.

\bibitem{st0223}
N.~J. Cox.
\newblock {Stata tip 96: Cube roots}.
\newblock {\em Stata Journal}, 11(1):149--154(6), 2011.
\newblock URL: \url{http://www.stata-journal.com/article.html?article=st0223}.

\bibitem{FloridaSPM}
T.~A. Davis and Y.~Hu.
\newblock {The University of Florida Sparse Matrix Collection}.
\newblock {\em ACM Transactions on Mathematical Software}, 38(1):1:1--1:25,
  2011.

\bibitem{Zoltan}
K.~D. Devine, E.~G. Boman, R.~T. Heaphy, R.~H. Bisseling, and {\"{U}}.~V.
  Cataly{\"u}rek.
\newblock {Parallel Hypergraph Partitioning for Scientific Computing}.
\newblock In {\em 20th International Conference on Parallel and Distributed
  Processing}, IPDPS, pages 124--124. IEEE, 2006.

\bibitem{donath1988logic}
W.E. Donath.
\newblock Logic partitioning.
\newblock {\em Physical Design Automation of VLSI Systems}, pages 65--86, 1988.

\bibitem{edmonds1972theoretical}
J.~Edmonds and R.~M. Karp.
\newblock {Theoretical Improvements in Algorithmic Efficiency for Network Flow
  Problems}.
\newblock {\em Journal of the ACM}, 19(2):248--264, 1972.

\bibitem{FM82}
C.~Fiduccia and R.~Mattheyses.
\newblock {A Linear Time Heuristic for Improving Network Partitions}.
\newblock In {\em 19th {ACM}/{IEEE} Design Automation Conf.}, pages 175--181,
  1982.

\bibitem{FordFulkerson}
D.~R. Ford and D.~R. Fulkerson.
\newblock {\em Flows in Networks}.
\newblock Princeton University Press, 1962.

\bibitem{ford1956maximal}
Lester~R Ford and Delbert~R Fulkerson.
\newblock {Maximal Flow through a Network}.
\newblock {\em Canadian Journal of Mathematics}, 8(3):399--404, 1956.

\bibitem{GoldbergT88}
A.~V. Goldberg and R.~E. Tarjan.
\newblock A new approach to the maximum-flow problem.
\newblock {\em {Journal of the ACM}}, 35(4):921--940, 1988.

\bibitem{goldberg2011maximum}
Andrew Goldberg, Sagi Hed, Haim Kaplan, Robert Tarjan, and Renato Werneck.
\newblock {Maximum Flows by Incremental Breadth-First Search}.
\newblock {\em Proceedings of 2011 European Symposium on Algorithms}, pages
  457--468, 2011.

\bibitem{HamannS16}
M.~Hamann and B.~Strasser.
\newblock {Graph Bisection with Pareto-Optimization}.
\newblock In {\em Proceedings of the Eighteenth Workshop on Algorithm
  Engineering and Experiments, {ALENEX} 2016, Arlington, Virginia, USA, January
  10, 2016}, pages 90--102, 2016.

\bibitem{MultiLevel_Hauck}
S.~Hauck and G.~Borriello.
\newblock {An Evaluation of Bipartitioning Techniques}.
\newblock {\em IEEE Transactions on Computer-Aided Design of Integrated
  Circuits and Systems}, 16(8):849--866, Aug 1997.

\bibitem{MultiLevel_Hendrickson}
B.~Hendrickson and R.~Leland.
\newblock {A Multi-Level Algorithm For Partitioning Graphs}.
\newblock {\em SC Conference}, 0:28, 1995.

\bibitem{MAHeuer}
T.~Heuer.
\newblock {High Quality Hypergraph Partitioning via Max-Flow-Min-Cut
  Computations}.
\newblock Master's thesis, KIT, 2018.

\bibitem{hs2017sea}
T.~Heuer and S.~Schlag.
\newblock {Improving Coarsening Schemes for Hypergraph Partitioning by
  Exploiting Community Structure}.
\newblock In {\em {16th International Symposium on Experimental Algorithms,
  (SEA)}}, page 21:1–21:19, 2017.

\bibitem{HuMoerder85}
T.~C. Hu and K.~Moerder.
\newblock {Multiterminal Flows in a Hypergraph}.
\newblock In T.C. Hu and E.S. Kuh, editors, {\em {VLSI Circuit Layout: Theory
  and Design}}, chapter~3, pages 87--93. IEEE Press, 1985.

\bibitem{IhlerWW93}
E.~Ihler, D.~Wagner, and F.~Wagner.
\newblock {Modeling Hypergraphs by Graphs with the Same Mincut Properties}.
\newblock {\em Inf. Process. Lett.}, 45(4):171--175, 1993.

\bibitem{SHP}
I.~Kabiljo, B.~Karrer, M.~Pundir, S.~Pupyrev, A.~Shalita, Y.~Akhremtsev, and
  Presta. A.
\newblock {Social Hash Partitioner: {A} Scalable Distributed Hypergraph
  Partitioner}.
\newblock {\em {PVLDB}}, 10(11):1418--1429, 2017.

\bibitem{hMetisRB}
G.~Karypis, R.~Aggarwal, V.~Kumar, and S.~Shekhar.
\newblock {Multilevel Hypergraph Partitioning: Applications in VLSI Domain}.
\newblock {\em {IEEE Transactions on Very Large Scale Integration VLSI
  Systems}}, 7(1):69--79, 1999.

\bibitem{hMetisKway}
G.~Karypis and V.~Kumar.
\newblock {Multilevel $K$-way Hypergraph Partitioning}.
\newblock In {\em Proceedings of the 36th ACM/IEEE Design Automation
  Conference}, pages 343--348. ACM, 1999.

\bibitem{KLAlgorithm}
B.~W. Kernighan and S.~Lin.
\newblock {An Efficient Heuristic Procedure for Partitioning Graphs}.
\newblock {\em {The Bell System Technical Journal}}, 49(2):291--307, Feb 1970.

\bibitem{lang2004flow}
K.~Lang and S.~Rao.
\newblock {A Flow-Based Method for Improving the Expansion or Conductance of
  Graph Cuts}.
\newblock In {\em Proceedings of 10th International IPCO Conference}, volume~4,
  pages 325--337. Springer, 2004.

\bibitem{lawler1973}
E.~Lawler.
\newblock {Cutsets and Partitions of Hypergraphs}.
\newblock {\em Networks}, 3(3):275--285, 1973.

\bibitem{lawler2001combinatorial}
E.~Lawler.
\newblock {\em Combinatorial Optimization : Networks and Matroids}.
\newblock Holt, Rinehart, and Whinston, 1976.

\bibitem{Lengauer:1990}
T.~Lengauer.
\newblock {\em {Combinatorial Algorithms for Integrated Circuit Layout}}.
\newblock John Wiley \& Sons, Inc., 1990.

\bibitem{480016}
J.~Li, J.~Lillis, and C.~K. Cheng.
\newblock {Linear decomposition algorithm for VLSI design applications}.
\newblock In {\em Proceedings of IEEE International Conference on Computer
  Aided Design (ICCAD)}, pages 223--228, Nov 1995.

\bibitem{GraphEdgeReduction}
H.~Liu and D.~F. Wong.
\newblock {Network-Flow-Based Multiway Partitioning with Area and Pin
  Constraints}.
\newblock {\em IEEE Transactions on Computer-Aided Design of Integrated
  Circuits and Systems}, 17(1):50--59, Jan 1998.

\bibitem{FormulaPartitioning14}
Z.~Mann and P.~Papp.
\newblock {Formula partitioning revisited}.
\newblock In Daniel~Le Berre, editor, {\em POS-14. Fifth Pragmatics of SAT
  workshop}, volume~27 of {\em EPiC Series in Computing}, pages 41--56.
  EasyChair, 2014.

\bibitem{DBLP:conf/wea/MeyerhenkeSS14}
H.~Meyerhenke, P.~Sanders, and C.~Schulz.
\newblock {Partitioning Complex Networks via Size-Constrained Clustering}.
\newblock In {\em 13th International Symposium on Experimental Algorithms,
  (SEA)}, pages 351--363, 2014.

\bibitem{Papa2007}
D.~A. Papa and I.~L. Markov.
\newblock {Hypergraph Partitioning and Clustering}.
\newblock In T.~F. Gonzalez, editor, {\em {Handbook of Approximation Algorithms
  and Metaheuristics.}} Chapman and Hall/CRC, 2007.

\bibitem{picard1980structure}
Jean-Claude Picard and Maurice Queyranne.
\newblock {On the Structure of all Minimum Cuts in a Network and Applications}.
\newblock {\em Combinatorial Optimization II}, pages 8--16, 1980.

\bibitem{pistorius2003}
Joachim Pistorius and Michel Minoux.
\newblock {An Improved Direct Labeling Method for the Max--Flow Min--Cut
  Computation in Large Hypergraphs and Applications}.
\newblock {\em International Transactions in Operational Research},
  10(1):1--11, 2003.

\bibitem{HypergraphKFM}
L.~A. Sanchis.
\newblock {Multiple-way Network Partitioning}.
\newblock {\em IEEE Trans. on Computers}, 38(1):62--81, 1989.
\newblock \href {http://dx.doi.org/10.1109/12.8730}
  {\path{doi:10.1109/12.8730}}.

\bibitem{kaffpa}
P.~Sanders and C.~Schulz.
\newblock {Engineering Multilevel Graph Partitioning Algorithms}.
\newblock In {\em 19th European Symposium on Algorithms}, volume 6942 of {\em
  LNCS}, pages 469--480. Springer, 2011.

\bibitem{KaHyPar-R}
S.~Schlag, V.~Henne, T.~Heuer, H.~Meyerhenke, P.~Sanders, and C.~Schulz.
\newblock {$k$-way Hypergraph Partitioning via $n$-Level Recursive Bisection}.
\newblock In {\em 18th Workshop on Algorithm Engineering and Experiments
  (ALENEX)}, pages 53--67, 2016.

\bibitem{Schweikert:1972}
D.~G. Schweikert and B.~W. Kernighan.
\newblock {A Proper Model for the Partitioning of Electrical Circuits}.
\newblock In {\em Proceedings of the 9th Design Automation Workshop}, DAC,
  pages 57--62. ACM, 1972.

\bibitem{trifunovic2006parallel}
A.~Trifunovic.
\newblock {\em {Parallel Algorithms for Hypergraph Partitioning}}.
\newblock PhD thesis, University of London, 2006.

\bibitem{Parkway2.0}
A.~Trifunović and W.~J. Knottenbelt.
\newblock {Parallel Multilevel Algorithms for Hypergraph Partitioning}.
\newblock {\em {Journal of Parallel and Distributed Computing}}, 68(5):563 --
  581, 2008.

\bibitem{DBLP:conf/dimacs/CatalyurekDKU12}
{{\"{U}}. V. {\c{C}}ataly{\"{u}}rek and M. Deveci and K. Kaya and B.
  U{\c{c}}ar}.
\newblock {UMPa: {A} multi-objective, multi-level partitioner for communication
  minimization}.
\newblock In Bader et~al. \cite{DBLP:conf/dimacs/2012}, pages 53--66.

\bibitem{doi:10.1137/S1064827502410463}
B.~Uçar and C.~Aykanat.
\newblock {Encapsulating Multiple Communication-Cost Metrics in Partitioning
  Sparse Rectangular Matrices for Parallel Matrix-Vector Multiplies}.
\newblock {\em SIAM Journal on Scientific Computing}, 25(6):1837--1859, 2004.

\bibitem{Mondriaan}
B.~Vastenhouw and R.~H. Bisseling.
\newblock {A Two-Dimensional Data Distribution Method for Parallel Sparse
  Matrix-Vector Multiplication}.
\newblock {\em SIAM Review}, 47(1):67--95, 2005.

\bibitem{DAC2012}
N.~Viswanathan, C.~Alpert, C.~Sze, Z.~Li, and Y.~Wei.
\newblock The dac 2012 routability-driven placement contest and benchmark
  suite.
\newblock In {\em Proceedings of the 49th Annual Design Automation Conference},
  DAC '12, pages 774--782. ACM, 2012.

\bibitem{Wichlund:1998}
S.~Wichlund.
\newblock On multilevel circuit partitioning.
\newblock In {\em 1998 International Conference on Computer-aided Design},
  ICCAD, pages 505--511. ACM, 1998.

\bibitem{yang1996balanced}
H.~H. Yang and D.~F. Wong.
\newblock {Efficient Network Flow Based Min-Cut Balanced Partitioning}.
\newblock {\em IEEE Transactions on Computer-Aided Design of Integrated
  Circuits and Systems}, 15(12):1533--1540, 1996.

\end{thebibliography}
\newpage
\appendix
\section{Effectiveness Tests}\label{app:effective}
To evaluate the effectiveness of our configurations presented in Section \ref{sec:algo_configuration}
we give each configuration the \emph{same} time to compute a partition. For each instance (hypergraph, $k$),
we execute each configuration once and note the \emph{largest} running time $t_{H,k}$.
Then each configuration gets time $3 t_{H,k}$ to compute a partition (i.e., we take the best partition out of
several repeated runs). Whenever a new run of a partition would exceed the largest running time,
we perform the next run with a certain probability such that the expected running time is $3t_{H,k}$.
The results of this procedure, which was initially proposed in \cite{kaffpa}, are presented
in Table~\ref{tbl:alpha_effectiveness_exp}. We see that the combinations of flow-based refinement and FM local search
perform better than repeated executions of the baseline configuration \FlowVariant{-}{-}{+}.
The most effective configuration is \FlowVariant{+}{+}{-} with $\alpha'=16$, which was chosen as the default
configuration for \KaHyPar{MF}.

\begin{table}[ht]
  \caption{Results of the effectiveness test 
          for different configurations of our flow-based refinement
          framework for increasing $\alpha'$. The quality in column Avg$[\%]$ is relative
          to the baseline configuration \FlowVariant{-}{-}{+}. }
\centering
\begin{tabular}{rS[table-number-alignment = center]
                   S[table-number-alignment = center]
                   S[table-number-alignment = center]
                   S[table-number-alignment = center]}
  \toprule
Config &  \multicolumn{1}{c}{\FlowVariant{+}{-}{-}} &  \multicolumn{1}{c}{\FlowVariant{+}{+}{-}}  &  \multicolumn{1}{c}{\FlowVariant{+}{-}{+}}  &  \multicolumn{1}{c}{\FlowVariant{+}{+}{+}} \\
$\alpha'$ & Avg $[\%]$ & Avg $[\%]$ & Avg $[\%]$ & Avg $[\%]$ \\
\midrule%
\csname @@input\endcsname experiments/flow_alpha/flow_alpha_effectiveness_table_m2_ibfs2.tex
\bottomrule
\end{tabular}

\label{tbl:alpha_effectiveness_exp}
\end{table}
\clearpage

\section{Average Connectivity Improvement}\label{app:avg_improvement}
\begin{table}[ht!]
  \caption{Comparing the average solution quality of \KaHyPar{MF} with the average results of \KaHyPar{CA} and
    other partitioners for different benchmark sets (top) and different values of $k$ (bottom). All values correspond to the quality improvement
    of \KaHyPar{MF} relative to the respective partitioner (in \%).}
\label{tbl:running_time} 
\centering
\begin{tabular}{c*{8}{S[table-number-alignment = center]}}
  \toprule
\multirow{2}{*}{Algorithm} & \multicolumn{7}{c}{Avg. $(\lambda - 1)$} \\
\cmidrule{2-9}
 & \multicolumn{1}{c}{\ALL} & \multicolumn{1}{c}{\DAC} & \multicolumn{1}{c}{\ISPD} & \multicolumn{1}{c}{\Primal} & \multicolumn{1}{c}{\Literal} & \multicolumn{1}{c}{\Dual} & \multicolumn{1}{c}{\SPM} &  \multicolumn{1}{c}{WebSoc}  \\
\midrule%
\csname @@input\endcsname experiments/final_flow/final_flow_km1_per_instance_m2_ibfs.tex
\midrule
& \multicolumn{1}{c}{$k = 2$} & \multicolumn{1}{c}{$k = 4$} & \multicolumn{1}{c}{$k = 8$} & \multicolumn{1}{c}{$k = 16$} & \multicolumn{1}{c}{$k = 32$} & \multicolumn{1}{c}{$k = 64$} & \multicolumn{1}{c}{$k = 128$}  \\
\midrule%
\csname @@input\endcsname experiments/final_flow/final_flow_km1_per_k_m2_ibfs.tex
\bottomrule
\end{tabular} 
\end{table}

\clearpage

\section{Properties of Benchmark Sets}\label{app:hypergraphs}
\begin{table}[h]
        \caption{Basic properties of our parameter tuning benchmark set. The number of pins is denoted with $p$.}\label{tbl:instancessmall}

\centering
\begin{tabular}{llrrr}
  \toprule
Class & Hypergraph                & \multicolumn{1}{c}{$n$}     & \multicolumn{1}{c}{$m$}    & \multicolumn{1}{c}{$p$}     \\
\midrule
\multirow{5}{*}{ISPD} & ibm06                     & \numprint{32498}   & \numprint{34826}  & \numprint{128182}  \\
&ibm07                     & \numprint{45926}   & \numprint{48117}  & \numprint{175639}  \\ 
&ibm08                     & \numprint{51309}   & \numprint{50513}  & \numprint{204890}  \\ 
&ibm09                     & \numprint{53395}   & \numprint{60902}  & \numprint{222088}  \\ 
&ibm10                     & \numprint{69429}   & \numprint{75196}  & \numprint{297567}  \\
\midrule
\multirow{5}{*}{Dual} &6s9                       & \numprint{100384} & \numprint{34317} & \numprint{234228} \\ 
&6s133                     & \numprint{140968} & \numprint{48215} & \numprint{328924} \\
&6s153                     & \numprint{245440} & \numprint{85646} & \numprint{572692} \\
&dated-10-11-u             & \numprint{629461} & \numprint{141860} & \numprint{1429872} \\ 
&dated-10-17-u             & \numprint{1070757} & \numprint{229544} & \numprint{2471122} \\ 
\midrule
\multirow{5}{*}{Literal}&6s133                     & \numprint{96430}  & \numprint{140968} & \numprint{328924} \\
&6s153                     & \numprint{171292}  & \numprint{245440} & \numprint{572692} \\
&aaai10-planning-ipc5      & \numprint{107838}  & \numprint{308235} & \numprint{690466} \\
&dated-10-11-u             & \numprint{283720}  & \numprint{629461} & \numprint{1429872} \\
&atco\_enc2\_opt1\_05\_21  & \numprint{112732}  & \numprint{526872} & \numprint{2097393} \\
\midrule
\multirow{5}{*}{Primal}   & 6s153                    & \numprint{85646}  & \numprint{245440} & \numprint{572692} \\
 &aaai10-planning-ipc5     & \numprint{53919}  & \numprint{308235} & \numprint{690466} \\
 &hwmcc10-timeframe        & \numprint{163622}  & \numprint{488120} & \numprint{1138944} \\
 &dated-10-11-u            & \numprint{141860}  & \numprint{629461} & \numprint{1429872} \\
&atco\_enc2\_opt1\_05\_21 & \numprint{56533}  & \numprint{526872} & \numprint{2097393} \\
\midrule
\multirow{5}{*}{SPM}& mult\_dcop\_01           & \numprint{25187}   & \numprint{25187}  & \numprint{193276} \\
 &vibrobox                 & \numprint{12328}   & \numprint{12328}  & \numprint{342828} \\
 &RFdevice                 & \numprint{74104}   & \numprint{74104}  & \numprint{365580} \\
 &mixtank\_new             & \numprint{29957}   & \numprint{29957}  & \numprint{1995041} \\
 &laminar\_duct3D          & \numprint{67173}   & \numprint{67173}  & \numprint{3833077} \\
 \bottomrule
        \end{tabular}
        \end{table}

\begin{table}
  \caption{Basic properties of the graph instances.}\label{tbl:graphinstances}

\centering
\begin{tabular}{lrr}
  \toprule
  Graph                & \multicolumn{1}{c}{$n$}     & \multicolumn{1}{c}{$m$}                                                                                                                                                                                                                                                                                            \\
  \cline{1-3}
  p2p-Gnutella04       & \numprint{6405}   & \numprint{29215}                                                                                                                                                                                                                                                                                                             \\
  wordassociation-2011 & \numprint{10617}  & \numprint{63788}                                                                                                                                                                                                                                                                                                             \\
  PGPgiantcompo        & \numprint{10680}  & \numprint{24316}                                                                                                                                                                                                                                                                                                             \\
  email-EuAll          & \numprint{16805}  & \numprint{60260}                                                                                                                                                                                                                                                                                                             \\
  as-22july06          & \numprint{22963}  & \numprint{48436}                                                                                                                                                                                                                                                                                                             \\
  soc-Slashdot0902     & \numprint{28550}  & \numprint{379445}                                                                                                                                                                                                                                                                                                            \\
  loc-brightkite       & \numprint{56739}  & \numprint{212945}                                                                                                                                                                                                                                                                                                            \\
  enron                & \numprint{69244}  & \numprint{254449}                                                                                                                                                                                                                                                                                                            \\
  loc-gowalla          & \numprint{196591} & \numprint{950327}                                                                                                                                                                                                                                                                                                            \\
  coAuthorsCiteseer    & \numprint{227320} & \numprint{814134}                                                                                                                                                                                                                                                                                                            \\
  wiki-Talk            & \numprint{232314} & $\approx$1.5M                                                                                                                                                                                                                                                                                                                \\
  citationCiteseer & \numprint{268495} & $\approx$1.2M                                                                                                                                                                                                                                                                                                                    \\
  coAuthorsDBLP    & \numprint{299067} & \numprint{977676}                                                                                                                                                                                                                                                                                                                \\
  cnr-2000         & \numprint{325557} & $\approx$2.7M                                                                                                                                                                                                                                                                                                                    \\
  web-Google       & \numprint{356648} & $\approx$2.1M                                                                                                                                                                                                                                                                                                                    \\
  \bottomrule
\end{tabular}
\end{table}
\clearpage
\section{Excluded Instances}\label{app:excluded_instances}
{\footnotesize
  \begin{longtable}{p{5cm}lllllll}
    \caption{Instances excluded from the full benchmark set evaluation. Note that using flow-based refinements did not lead to any further exclusions.}\label{tbl:excluded}                                                                                                                                                                                                                                                                                                              \\
       \toprule
Hypergraph                                                                                                                    & 2           & 4                               & 8                               & 16                              & 32                              & 64                                       & 128                                      \\
 \midrule
\Primal \\
\midrule
10pipe-q0-k                                                                                                          & $\square$   & $\square$                       & $\square$                       & $\square$                       & $\square$                       & $\square$                                & $\square$                                \\
11pipe-k                                                                                                             & $\square$   & $\square$                       & $\square$                       & $\square$                       & $\square$                       & $\square$                                & \ding{109}$\square$                      \\
  
  11pipe-q0-k                                                                                                          & $\square$   & $\square$                       & $\square$                       & $\square$                       & $\square$                       & $\square$                                & $\square$                                \\
  
  9dlx-vliw-at-b-iq3                                                                                                   & $\square$   & $\square$                       & $\square$                       & $\square$                       & $\square$                       & $\square$                                & $\square$                                \\
  9vliw-m-9stages-iq3-C1-bug7                                                                                          & $\triangle$ & $\triangle$                     &                                 & $\triangle$                     & \ding{109}$\triangle$           & \ding{109}$\triangle$                    & \ding{109}$\triangle$                    \\
  9vliw-m-9stages-iq3-C1-bug8                                                                                          & $\triangle$ & $\triangle$                     &                                 & $\triangle$                     & \ding{109}$\triangle$           & \ding{109}$\triangle$                    & \ding{109}$\triangle$                    \\
  blocks-blocks-37-1.130-NOTKNOWN                                                                                      & $\square$   & $\square$                       & $\square$                       & $\square$                       & $\square$                       & $\square$                                & $\square$                                \\
  openstacks-p30-3.085-SAT                                                                                             & $\square$   & $\square$                       & $\square$                       & $\square$                       & $\square$                       & $\square$                                & $\square$                                \\
  openstacks-sequencedstrips-nonadl-nonnegated-os-sequencedstrips-p30-3.025-NOTKNOWN                                   & $\square$   & $\square$                       & $\square$                       & $\square$                       & $\square$                       & $\square$                                & $\square$                                \\
  openstacks-sequencedstrips-nonadl-nonnegated-os-sequencedstrips-p30-3.085-SAT                                        & $\square$   & $\square$                       & $\square$                       & $\square$                       & $\square$                       & $\square$                                & $\square$                                \\
transport-transport-city-sequential-25nodes-1000size-3degree-100mindistance-3trucks-10packages-2008seed.050-NOTKNOWN & $\square$   &                                 &                                 & $\square$                       &                                 &                                          & $\square$                                \\

  velev-vliw-uns-2.0-uq5                                                                                               & $\square$   & $\square$                       & $\square$                       & $\square$                       & $\square$                       & $\square$                                & $\square$                                \\

  velev-vliw-uns-4.0-9                                                                                                 & $\square$   & $\square$                       & $\square$                       & $\square$                       & $\square$                       & $\square$                                & $\square$                                \\
  \midrule
\Literal \\
\midrule
  11pipe-k                                                                                                                    &             &                                 &                                 & \ding{109}                      & \ding{109}                      & \ding{109}                               & \ding{109}                               \\
  9vliw-m-9stages-iq3-C1-bug7                                                                                                 & $\triangle$ & $\triangle$                     & \ding{108}\ding{109}$\triangle$ & \ding{108}\ding{109}$\triangle$ & \ding{108}\ding{109}$\triangle$ & \ding{108}\ding{109}$\square$$\triangle$ & \ding{108}\ding{109}$\square$$\triangle$ \\

  9vliw-m-9stages-iq3-C1-bug8                                                                                                 & $\triangle$ & $\triangle$                     & \ding{108}\ding{109}$\triangle$ & \ding{108}\ding{109}$\triangle$ & \ding{108}\ding{109}$\triangle$ & \ding{108}\ding{109}$\square$$\triangle$ & \ding{108}\ding{109}$\square$$\triangle$ \\

  blocks-blocks-37-1.130                                                                                             &             & $\square$                       & $\square$                       & $\square$                       & $\square$                       & $\square$                                & $\square$                                \\
  \midrule
  \Dual \\
\midrule
10pipe-q0-k                                                                                                            &             &                                 &                                 & $\triangle$                     & $\triangle$                     & $\triangle$                              & \ding{109}$\triangle$                    \\
11pipe-k                                                                                                               & $\triangle$ & \ding{109}$\triangle$           & \ding{109}$\triangle$           & \ding{109}$\triangle$           & \ding{109}$\triangle$           & \ding{109}$\triangle$                    & \ding{109}$\triangle$                    \\
11pipe-q0-k                                                                                                            &             &                                 &                                 &                                 & $\triangle$                     & \ding{109}$\triangle$                    & \ding{109}$\triangle$                    \\
9dlx-vliw-at-b-iq3                                                                                                     &             &                                 &                                 &                                 &                                 &                                          & $\triangle$                              \\
9vliw-m-9stages-iq3-C1-bug7                                                                                            & $\triangle$ & \ding{108}\ding{109}$\triangle$ & \ding{108}\ding{109}$\triangle$ & \ding{108}\ding{109}$\triangle$ & \ding{108}\ding{109}$\triangle$ & \ding{108}\ding{109}$\triangle$          & \ding{108}\ding{109}$\triangle$          \\
9vliw-m-9stages-iq3-C1-bug8                                                                                            & $\triangle$ & \ding{108}\ding{109}$\triangle$ & \ding{108}\ding{109}$\triangle$ & \ding{108}\ding{109}$\triangle$ & \ding{108}\ding{109}$\triangle$ & \ding{108}\ding{109}$\triangle$          & \ding{108}\ding{109}$\triangle$          \\
blocks-blocks-37-1.130-NOTKNOWN                                                                                        &             & \ding{109}                      & \ding{108}\ding{109}            & \ding{108}\ding{109}            & \ding{108}\ding{109}            & \ding{108}\ding{109}                     & \ding{108}\ding{109}$\triangle$          \\
E02F20                                                                                                                 &             &                                 &                                 &                                 &                                 &                                          & \ding{109}                               \\
E02F22                                                                                                                 &             &                                 &                                 &                                 &                                 & \ding{109}                               & \ding{109}                               \\
  q-query-3-L100-coli.sat                                                                                                &             &                                 &                                 &                                 &                                 &                                          & $\triangle$                              \\
  q-query-3-L150-coli.sat                                                                                                &             &                                 &                                 &                                 &                                 & $\triangle$                              & $\triangle$                              \\
  q-query-3-L200-coli.sat                                                                                                &             &                                 &                                 &                                 & $\triangle$                     & $\triangle$                              & $\triangle$                              \\
  q-query-3-L80-coli.sat                                                                                                 &             &                                 &                                 &                                 &                                 &                                          & $\triangle$                              \\
transport-transport-city-sequential-25nodes-1000size-3degree-100mindistance-3trucks-10packages-2008seed.030-NOTKNOWN   &             &                                 &                                 &                                 &                                 &                                          & $\triangle$                              \\
  velev-vliw-uns-2.0-uq5                                                                                                &             &                                 & $\triangle$                     & $\triangle$                     & $\triangle$                     & $\triangle$                              & $\triangle$                              \\
  velev-vliw-uns-4.0-9                                                                                                   &             &                                 &                                 &                                 & $\triangle$                     & $\triangle$                              & $\triangle$                              \\
\\
  \midrule
\SPM \\
\midrule
  192bit                                                                                                                      & $\square$   &                                 &                                 & $\square$                       &                                 &                                          &                                          \\
  appu                                                                                                                        &             &                                 &                                 &                                 &                                 & \ding{109}                               & \ding{109}                               \\
  ESOC                                                                                                                        & $\square$   & $\square$                       &                                 &                                 & $\square$                       & \ding{109}$\square$                      & $\square$                                \\
  human-gene2                                                                                                                 &             &                                 &                                 &                                 & \ding{109}$\triangle$           & \ding{109}$\triangle$                    & \ding{109}$\triangle$                    \\
  IMDB                                                                                                                        &             &                                 &                                 & $\triangle$                     & $\triangle$                     & $\triangle$                              & $\triangle$                              \\
  kron-g500-logn16                                                                                                            &             & $\triangle$                     & $\triangle$                     & $\triangle$                     & $\triangle$                     & \ding{109}$\triangle$                    & \ding{109}$\triangle$                    \\
  Rucci1                                                                                                                      &             &                                 &                                 &                                 & $\square$                       &                                          &                                          \\
  sls                                                                                                                         & $\square$   & $\square$                       & $\square$                       & \ding{109}$\square$             & \ding{109}$\square$             & \ding{109}$\square$                      & \ding{109}$\square$                      \\
  Trec14                                                                                                                      &             &                                 &                                 &                                 &                                 &                                          & \ding{109}                               \\
   \bottomrule
  \end{longtable}
}
\begin{table}[h!]
\centering
  \caption*{
    \begin{tabular}{|l l|}
      \hline
      $\triangle$~:                                                                                                           & \KaHyPar{CA} exceeded time limit                                                                                                                                                                                                            \\
      \ding{108}~:                                                                                                            & \hMetis{R} exceeded time limit                                                                                                                                                                                                           \\
      \ding{109}~:                                                                                                            & \hMetis{K} exceeded time limit                                                                                                                                                                                                           \\
      $\square$~:                                                                                                             & \PaToH{Q} memory allocation error                                                                                                                                                                                                        \\
      \hline
    \end{tabular}
  }
\end{table}

\end{document}